\newtheorem{example}{Example}
\newtheorem{theorem}{Theorem}
\newtheorem{lemma}[theorem]{Lemma}
\newtheorem{definition}{Definition}
\newcommand{\ignore}[1]{{}}
\newcommand{\blue}[1]{{\color{blue} #1}}
\newcommand{\Iach}{\mathfrak{I}_{ACh}}
\newcommand{\Var}{\mathit{Var}}
\newcommand{\calT}{\mathcal{T}}
\newcommand{\calF}{\mathcal{F}}
\newcommand{\calX}{\mathcal{X}}
\newcommand{\calV}{\mathcal{V}}
\newcommand{\striple}{\Gamma||\triangle||\sigma}
\newcommand{\striplep}{\Gamma'||\triangle'||\sigma'}
\newcommand{\mstriple}{\mathbb{M}_{\Iach}(\Gamma, \triangle, \sigma)}
\newcommand{\mstriplep}{\mathbb{M}_{\Iach}(\Gamma', \triangle', \sigma')}
\title[Bounded ACh Unification]{Bounded ACh Unification}
\author[Ajay K. Eeralla]{A\ls J\ls A\ls Y\ns K\ls U\ls M\ls A\ls R\ns E\ls E\ls R\ls A\ls L\ls L\ls A$^1$\thanks{Ajay Kumar Eeralla was partially supported by NSF Grant CNS 1314338}\ns and\ns C\ls H\ls R\ls I\ls S\ls T\ls O\ls P\ls H\ls E\ls R\ns L\ls Y\ls N\ls C\ls H$^2$ \addressbreak \addressbreak
   $^1$  Department of Electrical Engineering and Computer Science, University of Missouri, \addressbreak \hspace{0.25cm}Columbia, USA.
     \addressbreak $^2$ Department of Computer Science, Clarkson University, Potsdam, USA }
\begin{document}

\maketitle
%\begin{tcolorbox}[colback=yellow!100]
%Adjusted abstract according to the reviewer's comments
%  \end{tcolorbox}
\begin{abstract}
We consider the problem of the unification modulo an equational theory ACh, which consists of a function symbol $h$ that is homomorphic over an associative-commutative operator $+$. Since the unification modulo ACh theory is undecidable, we define a variant of the problem called \textit{bounded ACh unification}. In this bounded version of ACh unification, we essentially bound the number of times $h$ can be applied to a term recursively, and only allow solutions that satisfy this bound. There is no bound on the number of occurrences of $h$ in a term, and the $+$ symbol can be applied an unlimited number of times. We give inference rules for solving the bounded version of the problem and prove that the rules are sound, complete, and terminating. We have implemented the algorithm in Maude and give experimental results. We argue that this algorithm is useful in cryptographic protocol analysis.
 \end{abstract}
 % !TEX root = ACHunif_recent.tex
 \section{Introduction}
Unification is a method to find a solution for a set of equations. For instance, consider an equation $x + y \overset{?}= a + b$, where $x$ and $y$ are variables, and
$a$, and $b$ are constants. If $+$ is an uninterpreted function symbol, then the equation has one solution $\{ x \mapsto a,\,y \mapsto b\}$, and this unification is called syntactic unification. If the function symbol $+$ has the property of commutativity then the equation has two solutions: $\{ x \mapsto a,\,y \mapsto b\}$ and $\{ x \mapsto b,\,y \mapsto a\}$;
And this is called unification modulo the commutativity theory.\par
Unification modulo equational theories play a significant role in symbolic cryptographic protocol analysis~\cite{EMM}.
An overview and references for some of the algorithms may be seen in~\cite{KNW, EKLMMNS, NMM}.
One such equational theory is the distributive axioms:
$x \times (y + z) = (x \times y) + (x \times z);
(y + z)\times x = (y \times x) + (z \times x).$
A decision algorithm is presented for unification modulo two-sided distributivity in~\cite{MS}. A sub-problem of this, unification modulo one-sided distributivity, is in greater interest since many cryptographic protocol algorithms satisfy
the one-sided distributivity. In their paper~\cite{TA}, Tiden and Arnborg presented an algorithm for unification modulo one-sided distributivity: $x \times (y + z) = (x \times y) + (x \times z),$ and also it has been shown that it is undecidable if we add the properties of associativity $x + (y + z) = (x+y)+z$ and a one-sided unit element $x\times1=x$.
However, some counter examples~\cite{NMM} have been presented showing that the complexity of the algorithm is exponential, although they thought it was polynomial-time bounded.
\par
 For practical purposes, one-sided distributivity can be viewed as the homomorphism theory, $h(x+y) = h(x) +h(y)$, where the unary operator $h$ distributes over the binary operator $+$. Homomorphisms are highly used in cryptographic protocol analysis. In fact, Homomorphism is a common property
that many election voting protocols satisfy~\cite{KRS}. \par
Our goal is to present a novel construction of an algorithm to solve unification modulo the homomorphism theory over a binary symbol $+$ that also has the properties of associativity and commutativity (ACh), which is an undecidable unification problem~\cite{PN}. Given that ACh unification is undecidable but necessary to analyze cryptographic protocols, we developed an approximation of ACh unification, which we show to be decidable.
\par
In this paper, we present an algorithm to solve a modified general unification problem modulo the ACh theory, which we call {\em bounded ACh unification}. We define the {\em h-height} of a term to be basically the number of 
$h$ symbols recursively applied to each other. We then only search for ACh unifiers of a bounded h-height. We do not restrict the h-height of terms in unification problems. Moreover, the number of occurrences of the + symbol is bounded neither in a problem nor in its solutions.
In order to accomplish this, we define the {\em h-depth} of a variable, which is the number of $h$ symbols on top of a variable. We develop a set of inference rules for ACh unification that keep track of the h-depth of variables.
If the h-depth of any variable exceeds the bound $\kappa$, then the algorithm terminates with no solution. Otherwise, it gives all the unifiers or solutions to the problem. 
\section{Preliminary}
 \ignore{We assume that the reader is familiar with the basic notation of unification theory and term rewriting systems (see for example~\cite{BN, BS}).\\ \\}
 \subsection{Basic Notation}
 We briefly recall the standard notation of unification theory and term rewriting systems from~\cite{BN, BS}. \par
 Given a finite or countably infinite set of function symbols $\calF$, also known as a signature, and a countable set of variables $\mathcal{V}$, the set of $\calF$-terms over $\mathcal{V}$ is denoted by $\calT( \calF, \mathcal{V})$. The set of variables appearing in a term $t$ is denoted by $Var(t)$, and it is extended to sets of equations. A term is called ground if $Var(t)=\emptyset$.
 Let $Pos(t)$ be the set of positions of a term $t$ including the root position $\epsilon$~\cite{BS}. For any $p\in Pos(t)$, $t|_p$ is the subterm of $t$ at the position $p$ and $t[s]_p$ is the term
$t$ in which $t|_p$ is replaced by $s$.
\ignore{\blue{For any position~$p$ in a term $t$ (including the root position $\epsilon$), $t(p)$ is the symbol at position $p$,
$t|_p$ is the subterm of $t$ at position $p$, and $t[u]_p$ is the term
$t$ in which $t|_p$ is replaced by $u$.}}A substitution is a mapping from $\calV$ to $\calT( \calF, \mathcal{V})$ with only finitely many variables not mapped to themselves and is denoted by $\sigma = \{x_1 \mapsto t_1,\ldots, x_n \mapsto t_n\}$, where the domain of $\sigma$ is $Dom(\sigma):=\{x_1,\ldots, x_n\}$. The range of $\sigma$, denoted as $Range(\sigma)$, defined as union of the sets $\{x\sigma\}$, where $x$ is a variable in $Dom(\sigma)$. The identity substitution is a substitution that maps all the variables to themselves. 
The application of substitution $\sigma$ to a term $t$, denoted as $t\sigma$, is defined by induction on the structure of the terms:
\begin{itemize}
\item $x\sigma$, where $t$ is a variable $x$
\item $c$, where $t$ is a constant symbol $c$
\item $f(t_1\sigma, \ldots,t_n\sigma)$, where $t=f(t_{1}, \ldots, t_{n})$ with $n \geq 1$
\end{itemize}
\ignore{$$t\sigma = \left\{ \begin{array}{lcl}
%c & \mbox{if}& t = c, \\ 
x\sigma & \mbox{if}& t = x, \\ 
f(t_1\sigma, \ldots,t_n\sigma) & \mbox{if} & t=f( t_1, \ldots,t_n)
\end{array}
\right.$$
In the second case of this definition, $n=0$ is allowed: in this case, $f$ is a constant symbol and we have $f\sigma = f$. 
An application of a substitution $\sigma$ to a term $t$ is denoted as $t\sigma$ and defined as, $x\sigma$, $f(t_1\sigma,\ldots, t_n\sigma)$ when $t$ is a variable $x$ or otherwise. Of course, an application on constant symbol gives the constant back as a result.}
\par
 The restriction of a substitution $\sigma$ to a set variables $\calV$, denoted as $\sigma|\calV$, is the substitution which is equal to identity everywhere except over $\calV \cap Dom(\sigma)$, where it is coincides with $\sigma$.
  
 \begin{definition}[More General Substitution]
\emph{A substitution $\sigma$ is more general than substitution $\theta$ if there exists a substitution $\eta$ such that $\theta = \sigma \eta$, denoted as $\sigma \lesssim \theta$. Note that the relation $\lesssim$ is a quasi-ordering, i.e., reflexive and transitive.}
\end{definition}
\begin{definition}[Unifier, Most General Unifier]
\emph{A substitution $\sigma$ is a unifier or solution of two terms $s$ and $t$ if $s\sigma = t\sigma$; it is a most general unifier if for every unifier $\theta$ of $s$ and $t$, $\sigma \lesssim \theta$.
Moreover, a substitution $\sigma$ is a solution of a set of equations if it is a solution of each of the equations. If a substitution $\sigma$ is a solution of a set of equations $\Gamma$, then it is denoted by
$\sigma \models \Gamma$.}
\end{definition}
A set of identities $E$ is a subset of $\calT( \calF, \mathcal{V}) \times \calT( \calF, \mathcal{V})$ and are represented in the form 
$s\thickapprox t$. An equational theory $=_E$ is induced by a set of fixed identities $E$ and it is the least congruence relation that is closed under substitution and contains $E$.
\begin{definition}[\textit{E}-Unification Problem, \textit{E}-Unifier, \textit{E}-Unifiable]
\emph{Let $\calF$ be a signature and $E$ be an equational theory.
An \textit{E}-unification problem over $\calF$ is a finite set of equations $ \Gamma = \{ s_1 \overset{?}=_E t_1,\ldots, s_n \overset{?}=_E t_n\}$ between terms. An \textit{E}-unifier or E-solution of two terms $s$ and $t$ is a substitution $\sigma$ such that $s\sigma =_E t\sigma$.
 An \textit{E}-unifier of $\Gamma$ is a substitution $\sigma$ such that $s_i\sigma =_E t_i \sigma$ for 
 $i= 1,\ldots,n$. The set of all \textit{E}-unifiers is denoted by $\mathcal{U}_E(\Gamma)$ and $\Gamma$ is called \textit{E}-unifiable
 if $\mathcal{U}_E(\Gamma) \neq \emptyset$. If $E = \emptyset$ then $\Gamma$ is a syntactic unification problem.}
\end{definition}
Let $\Gamma = \{ s_1 \overset{?}=_E t_1,\ldots,s_n \overset{?}=_E t_n\}$ be a set of equations, and let $\theta$ be a substitution.
\ignore{We say that $\theta$ satisfies $\Gamma$ modulo equational theory $E$ if $\theta$ is an \textit{E}-solution of each equation in $\Gamma$, that is,
 $ s_i \theta =_E t_i\theta$ for $i=1,\ldots,n$.} We write $\theta \models_E \Gamma$ when $\theta$ is an \textit{E}-unifier of $\Gamma$.
Let $\sigma = \{ x_1 \mapsto t_1,\ldots,x_n \mapsto t_n\}$ and $\theta$ be substitutions, and let $E$ be an equational theory.
We say that $\theta$ satisfies $\sigma$ in the equational theory $E$ if $x_i \theta =_E t_i \theta $ for $i=1,\ldots,n$. We write it as $\theta \models_E \sigma$.

\begin{definition}\emph{ Let $E$ be an equational theory and $\calX$ be a set of variables. The substitution $\sigma$ is \textit{more general modulo E on $\calX$} than $\theta$ iff there exists a substitution $\sigma'$ such that $x\theta =_E x \sigma \sigma'$ for all $x \in \calX$. We write it as $\sigma\lesssim^\calX_E \theta$.}
\end{definition}

\begin{definition}[Complete Set of \textit{E}-Unifiers]
\emph{Let $\Gamma$ be a \textit{E}-unification problem over $\calF$ and let $Var(\Gamma)$ be the set of all variables occurring in $\Gamma$. A complete set of \textit{E}-unifiers of $\Gamma$ is a set $S$ of substitutions such that, each element of $S$ is a \textit{E}-unifier of $\Gamma$, i.e., $S \subseteq \mathcal{U}_E(\Gamma)$, and for each $\theta \in \mathcal{U}_E(\Gamma)$ there exists a $\sigma \in S$ such that $\sigma$ is more general modulo E on $Var(\Gamma)$ than $\theta$, i.e., $\sigma \lesssim^{Var(\Gamma)} _E\theta$.}
\end{definition}
%\begin{itemize}
%\item each element of $S$ is a \textit{E}-unifier of $\Gamma$, i.e., $S \subseteq \mathcal{U}_E(\Gamma)$,
%\item for each $\theta \in \mathcal{U}_E(\Gamma)$ there exists a $\sigma \in S$ such that $\sigma$ is more general modulo E on $Var(\Gamma)$ than $\theta$, i.e., $\sigma \lesssim^{Var(\Gamma)} _E\theta$.
%\end{itemize}

A complete set $S$ of \textit{E}-unifiers is minimal if for any two distinct unifiers $\sigma$ and $\theta$ in $S$, one is not more general modulo $E$ than the other,
i.e., $\sigma \lesssim^{Var(\Gamma)}_E \theta $ implies $\sigma = \theta$.
A minimal complete set of unifiers for a syntactic unification problem $\Gamma$ has only one element if it is not empty. It is denoted by {$mgu(\Gamma)$} and can be called most general unifier of unification problem $\Gamma$.
\begin{definition}
\emph{Let E be an equational theory. We say that a multi-set of equations $\Gamma'$ is a conservative E-extension of another multi-set of equations $\Gamma$ if any solution of $\Gamma'$ is also a solution of $\Gamma$ and any solution of $\Gamma$ can be extended to a solution of $\Gamma'$. This means for any solution $\sigma$ of $\Gamma$, there exists $\theta$ whose domain is the variables in $Var(\Gamma') \setminus Var(\Gamma)$ such that $\sigma\theta$ is a solution of $\Gamma$. The property of conservative E-extension is transitive.}
\end{definition}
\par
Let $\calF$ be a signature, and $l,r$ be $\calF$-terms. A \emph{rewrite rule} is an identity, denoted as $l \rightarrow r$, where $l$ is
not a variable and $\Var(r) \subseteq \Var(l)$.
A \emph{term rewriting system} (TRS) is a pair $(\calF, R)$, where $R$ is a finite set of rewrite rules. In general, a TRS is represented by $R$. A term $u$ rewrites to a term $v$ with respect to $R$, denoted
by $u \rightarrow_R v$ (or simply $u \rightarrow v$), if there exist a
position $p$ of $u$, $l \rightarrow r \in R$, and substitution
$\sigma$ such that $u|_p = l\sigma$ and $v = u[r\sigma]_p$.
A TRS $R$ is said to be \emph{terminating} if there is no infinite reduction sequences of the form $u_0 \rightarrow_R u_1 \rightarrow_R \ldots $. A TRS $R$
is \emph{confluent} if, whenever $u \rightarrow_R^{*} s_1$ and $u \rightarrow_R^{*} s_2$, there exists a term $v$ such that $s_1 \rightarrow_R^{*} v$ and
$s_2 \rightarrow_R^{*} v$. A TRS $R$ is convergent if it is both confluent and terminating.

\subsection{ACh Theory}
The equational theory we consider is the theory of a homomorphism over a binary function symbol $+$ which satisfies the properties of associativity and the commutativity. We abbreviate this theory as ACh. The signature $\calF$ includes a unary symbol $h$, and a binary symbol
$+$, and other uninterpreted function symbols with fixed arity. 
\par
The function symbols $h$ and $+$ in the signature $\calF$ satisfy the following identities:
\begin{itemize}
\item$x + (y+z) \thickapprox (x+y)+z \text{ (Associativity, A for short)}$
 \item $x + y \thickapprox y+x \text{ (Commutativity, C for short)}$
 \item $h(x + y) \thickapprox h(x) + h(y) \text{ (Homomorphism, h for short)}$
 \end{itemize}
 \subsection{Rewriting Systems}
We consider two convergent rewriting systems $R_1$ and $R_2$ for homomorphism $h$ modulo associativity and commutativity.
\begin{itemize}
\item $R_1 := \{h(x_1+x_2) \rightarrow h(x_1) + h(x_2)\}$ and
 \item $R_2 := \{h(x_1)+h(x_2) \rightarrow h(x_1+ x_2)\}$.
 % \par
 %Without loss of generality, $R_1$ and $R_2$ are extended for $n$ variables $x_1,\ldots,x_n$.
% \item $R_1 := \{h(x_1+\cdots+x_n) \rightarrow h(x_1) + \cdots+h(x_n)\}$ and
% \item $R_2 := \{h(x_1)+\cdots+h(x_n) \rightarrow h(x_1+\cdots+x_n)\}$.
 \end{itemize}

 \subsection{h-Depth Set}
 \label{sec:hdset}
 For convenience, we assume that our unification problem is in {\em flattened} form, i.e., that every equation in the problem is in one of the following forms: $x \overset{?} = y$, $x \overset{?}= h(y)$, $x \overset{?}= y_1 + \cdots + y_n$, and $x \overset{?}= f ( x_1,\ldots, x_n)$, where $x$ and $y$ are variables, $y_i$s and $x_i$s are pair-wise distinct variables,
 and $f$ is a free symbol with $n\geq0$. The first kind of equations are called {\em VarVar equations}.
 The second kind are called {\em $h$-equations}. The third kind are called 
 {\em $+$-equations}. The fourth kind are called {\em free equations}.
 \ignore{It is well-known that how to convert a unification problem into flattened form.}
\begin{definition}[Graph $\mathbb{G}(\Gamma)$]
\emph{Let $\Gamma$ be a unification problem. We define a graph $\mathbb{G}(\Gamma)$ as a graph where
each node represents a variable in $\Gamma$ and each edge represents a function symbol in $\Gamma$. To be exact, if an equation $y\overset{?}=f(x_1,\ldots, x_n)$, where $f$ is a symbol with $n\geq1$, is in $\Gamma$ then the graph $\mathbb{G}(\Gamma)$ contains $n$ edges $y\overset{f}\rightarrow x_1,\ldots,y\overset{f}\rightarrow x_n$.
For a constant symbol $c$, if an equation $y\overset{?}= c$ is in $\Gamma$ then the graph $\mathbb{G}(\Gamma)$ contains a vertex $y$.
Finally, the graph $\mathbb{G}(\Gamma)$ contains two vertices $y$ and $x$ if an equation $y\overset{?} = x$ is in $\Gamma$.}
\end{definition}
%The graph is called \textit{dependency graph} in~\cite{ALLNR2}.
\begin{definition}[h-Depth]
\emph{Let $\Gamma$ be a unification problem and let $x$ be a variable that occurs in $\Gamma$. Let $h$ be a unary symbol and let $f$ be a symbol (distinct from $h$) with arity greater than or equal to 1 and occurring in $\Gamma$.
We define h-depth of a variable $x$ as the maximum number of $h$-symbols along a path to $x$ in $\mathbb{G}(\Gamma)$, and it is denoted by $h_d(x, \Gamma)$. That is, $$h_d(x, \Gamma) := \max \{h_{dh}(x, \Gamma), h_{df}(x, \Gamma), 0 \}, $$ where $h_{dh}(x, \Gamma) := \max\{ 1+ h_d(y, \Gamma) \mid y\overset{h} \rightarrow x \text{ is an edge in } \mathbb{G}(\Gamma)\}$ and $h_{df}(x, \Gamma) := \max\{ h_d(y, \Gamma) \mid \text{there exists $f \neq h$ such that }y\overset{f}\rightarrow x \text{ is in } \mathbb{G}(\Gamma) \}$}.
 \end{definition}
\begin{definition}[h-Height]
\ignore{Let $\Gamma$ be a unification problem and let $t$ be a term that occurs in $\Gamma$.}
\emph{We define h-height of a term $t$ as the following:
$$h_h(t) := \left\{ \begin{array}{lcl}
h_h(t') + 1& \mbox{if}& t = h(t') \\ 
\max\{h_h(t_1),\ldots, h_h(t_n) \} & \mbox{if} & t = f (t_1,\ldots,t_n), f\neq h\\
 0 & \mbox{if}& t= x \,\text{or}\, c
\end{array}
\right.$$
where $f$ is a function symbol with arity greater than or equal to 1.}
 \end{definition}
 \ignore{Without loss of generality, we assume that h-depth and h-height is not defined for a variable that occurs on both sides of the equation. This is because the occur check rule--- which concludes the problem with no solution---presented in the next section has higher priority over the h-depth updating rules.}
 
\begin{definition}[h-Depth Set]
\emph{Let $\Gamma$ be a set of equations. The h-depth set of $\Gamma$, denoted $h_{ds}(\Gamma)$, is defined as
$h_{ds}(\Gamma) := \{ (x, h_{d}(x, \Gamma)) \mid x \text{ is a variable appearing in } \Gamma\}$.
\ignore{Let $\mathcal{V}$ be a set of variables occurring in $\Gamma$. We define a set h-depth set of $\Gamma$ whose elements are 
pairs of a variable from $\mathcal{V}$ and a non-negative integer.}In other words, the elements in the h-depth set are of the form $(x,\,c)$, where $x$ is a variable that occur in $\Gamma$ and
$c$ is a natural number representing the h-depth of $x$.}
\end{definition}
Maximum value of h-depth set $\triangle$ is the maximum of all $c$ values and it is denoted by $MaxVal(\triangle)$, i.e.,
$MaxVal(\triangle) := \max\{c \,|\, (x,\, c ) \in \triangle \text{ for some } x\}.$
\begin{definition}[\textit{ACh}-Unification Problem, Bounded \textit{ACh}-Unifier]
 \emph{An \textit{ACh}-unification problem over $\calF$ is a finite set of equations 
$ \Gamma = \{ s_1 \overset{?}=_{ACh} t_1,\ldots,s_n \overset{?}=_{ACh} t_n\}, s_i , t_i \in \calT( \calF, \mathcal{V}),$ where $ACh$ is the equational theory defined above.
A $\kappa$ bounded \textit{ACh}-unifier or $ \kappa$ bounded \textit{ACh}-solution of $\Gamma$ is a substitution $\sigma$ such that $s_i\sigma =_{ACh} t_i \sigma$, $h_h(s_i \sigma) \leq \kappa$, and $h_h( t_i\sigma) \leq \kappa$ for all $i$.}
\par
\emph{Notice that the bound $\kappa$ has no role in the problem but in the solution.}
\end{definition}
\setlength{\fboxsep}{1pt}
\section{Inference System $\mathfrak{I}_{ACh}$}
\subsection{Problem Format}
An inference system is a set of inference rules that transforms an equational unification problem into other.
In our inference procedure, we use a set triple $ {\Gamma||\triangle||\sigma}$ similar to the format presented in~\cite{LL}, where $\Gamma$ is a unification problem modulo the ACh theory, $\triangle$ is an h-depth set of $\Gamma$, and $\sigma$ is a substitution.
Let $\kappa \in \mathbb{N}$ be a bound on the h-depth of the variables. 
A substitution $\theta$ satisfies the set triple $ {\Gamma||\triangle||\sigma}$ if
$\theta$ satisfies $\sigma$ and every equation in $\Gamma$, $MaxVal(\triangle) \leq \kappa$, and we write that relation as $\theta \models {\Gamma||\triangle||\sigma}$.
We also use a special set triple $\bot$ for no solution in the inference procedure. Generally, the inference procedure is
based on the priority of rules and also uses \textit{don't care} non-determinism when there is no priority. i.e., any rule applied from a set
of rules without priority. Initially, $\Gamma$ is the non-empty set of equations to solve, $\triangle$ is an empty set, and $\sigma$ is the identity substitution. The inference rules are applied until either the set of equations is empty with most general unifier $\sigma$ or $\bot$ for no solution. Of course, the substitution $\sigma$ is a $\kappa$ bounded \textit{E}-unifier of $\Gamma$.
An inference rule is written as $ \frac{\Gamma||\triangle|| \sigma}{\Gamma'||\triangle'|| \sigma'}.$
This means that if something matches the top of this rule, then it is to be replaced with 
the bottom of the rule. \ignore{In the proofs we will write inference rules as follows:
$$\Gamma ||\triangle||\sigma \Rightarrow_{\Iach} \{\Gamma_1 ||\triangle_1||\sigma_1, \cdots, \Gamma_n ||\triangle_n|\sigma_n\}$$
meaning to branch and replace the left hand side with one of the right hand sides in each branch.
The only inference rule that has more than one branch is {\em AC Unification}.
So we often just write inference rules as follows:
$\Gamma ||\triangle||\sigma \Rightarrow_{\Iach} \Gamma' ||\triangle'||\sigma'$.}
\par
 Let $\mathcal{OV}$ be the set of variables occurring in the unification problem $\Gamma$ and let $\mathcal{NV}$ be a new set of variables such that $\mathcal{NV}=\mathcal{V}\setminus\mathcal{OV}$.
Unless otherwise stated we assume that $x,\,x_1,\ldots,x_n,\,\text{and}\,y,\,y_1,\ldots,y_n,\,z$ are variables in $\mathcal{V}$, $v,\,v_1,\ldots,v_n$ are in $\mathcal{NV}$, 
and terms $w, t,\,t_1,\ldots,t_n,\,s,\,s_1,\ldots,s_n$ in $\calT( \calF, \mathcal{V})$, and $f$ and $g$ are uninterpreted function symbols.
\ignore{Recall that $h$ is a unary, and the associativity and the commutativity operator $+$. }A fresh variable is a variable that is generated by the current inference rule and has never been used before.
\par
For convenience, we assume that that every equation in the problem is in one of the flattened forms (see Section~\ref{sec:hdset}).
If not, we apply flattening rules to put the equations into that form. These rules are performed before any other inference rule. They put the problem into flattened form and all the other inference rules leave the problem in flattened form, so there is no need to perform these rules again later. It is necessary to update the h-depth set $\triangle$ with the h-depth values for each variable during the inference procedure.
%\vspace{-0.2cm}
\subsection{Inference Rules}
\label{sec:inf:system}
We present a set of inference rules to solve a unification problem modulo associativity, commutativity, and homomorphism theory.\ignore{our inference procedure looking for the solutions within the given bound $\kappa$.} We also present some examples that illustrate the applicability of these rules.
\subsubsection{Flattening}
\hfill\par
Firstly, we present a set of inference rules for flattening the given set of equations. The variable $v$ represents a fresh variable in the following rules. 
\\ \\
%We replace an equation with two by introducing the fresh variable when the terms of both sides are not variables.\\ \\
\fbox{
\begin{minipage}{\textwidth}
\smallskip
\textbf{Flatten Both Sides (FBS)}$$ \frac{\{ t_1 \overset{?}= t_2\}\,\cup \,\Gamma||\triangle|| \sigma }{\{ v \overset{?}= t_1,\,v \overset{?}= t_2\}\,\cup \,\Gamma||\{(v,\, 0)\}\cup\triangle|| \sigma } 
\text{\hspace{1pt} if $t_1$ and $t_2$ $\notin\mathcal{V}$}$$ 
\smallskip
\end{minipage}
}
\\
%\parbox{\textwidth}{\\\textbf{Flatten Both Sides (FBS)}
%$$\frac{\{ t_1 \overset{?}= t_2\}\,\cup \,\Gamma||\triangle|| \sigma }{\{ v \overset{?}= t_1,\,v \overset{?}= t_2\}\,\cup \,\Gamma||\{(v,\, 0)\}\cup\triangle|| \sigma } 
%\text{\hspace{1pt} if $t_1$ and $t_2$ $\notin\mathcal{V}$}$$}}\\
\fbox{
\begin{minipage}{\textwidth}
\smallskip
\textbf{Flatten Left $+$ (FL)}
$$\frac{\{ t\overset{?}= t_1 + t_2\}\,\cup \,\Gamma||\triangle|| \sigma }{\{ t\overset{?}= v + t_2,\,v \overset{?}= t_1 \}\,\cup \,\Gamma||\{(v,\, 0)\}\cup\triangle|| \sigma } 
\text{\hspace{35pt} if $t_1 \notin \mathcal{V}$}$$
\smallskip
\end{minipage}
}\\ \\
\fbox{
\begin{minipage}{\textwidth}
\smallskip
\textbf{Flatten Right $+$ (FR)}
$$\frac{\{ t\overset{?}= t_1 + t_2\}\,\cup \,\Gamma||\triangle|| \sigma }{\{ t\overset{?}= t_1 + v,\,v \overset{?}= t_2\}\,\cup \,\Gamma||\{(v,\, 0)\}\cup\triangle|| \sigma } 
\text{\hspace{35pt} if $t_2 \notin \mathcal{V}$}$$
\smallskip
\end{minipage}
} \\ \\
\fbox{
\begin{minipage}{\textwidth}
\smallskip
\textbf{Flatten Under $h$ (FU)}
$$\frac{\{ t_1 \overset{?}= h(t)\}\,\cup \,\Gamma||\triangle|| \sigma }{\{ t_1\overset{?}= h(v),\,v \overset{?}= t \}\,\cup \,\Gamma||\{(v,\, 0)\}\cup\triangle|| \sigma } 
\text{\hspace{35pt} if $t \notin \mathcal{V}$}$$
\smallskip
\end{minipage}
} \\ 
\par
We demonstrate the applicability of these rules using the example below.
\begin{example}
\label{ex:flatten}
\emph{Solve the unification problem $\{h(h(x)) \overset{?}= (s + w)+(y+z) \}.$\\ \\
We only consider the set of equations $\Gamma$ here, not the full triple.\\ \\
 $\{h(h(x)) \overset{?}= (s + w)+(y+z) \}\overset{FBS}\Rightarrow\\
 \{ v \overset{?}= h(h(x)),\, v \overset{?}= (s + w)+(y+z) \} \overset{FL}\Rightarrow\\
 \{v \overset{?}= h(h(x)),\,v \overset{?}= v_1 +(y+z),\, v_1 \overset{?}= s+w\} \overset{FL}\Rightarrow\\
 \{v \overset{?}= h(h(x)),\,v \overset{?}= v_1 +(y+z),\, v_1 \overset{?}= v_2+w, v_2 \overset{?}=s\} \overset{FR}\Rightarrow\\
 \{v \overset{?}= h(h(x)),\,v \overset{?}= v_1 + v_3,\, v_1 \overset{?}= v_2+w,\,v_3 \overset{?}= y+z, v_2 \overset{?}=s\}\overset{FR}\Rightarrow\\
 \{v \overset{?}= h(h(x)),\,v \overset{?}= v_1 + v_3,\, v_1 \overset{?}= v_2+v_4,\,v_3 \overset{?}= y+z, v_2 \overset{?}=s, v_4 \overset{?}=w\} \overset{FU}\Rightarrow\\
  \{v \overset{?}= h(v_5),\,v \overset{?}= v_1 + v_3,\, v_1 \overset{?}= v_2+v_4,\,v_3 \overset{?}= y+z, v_2 \overset{?}=s, v_4 \overset{?}=w, v_5 \overset{?}=h(x)\}.$\\
% \{v \overset{?}= h(v_3),\, v_3 \overset{?}= h(x),\,v \overset{?}= v_1 + v_2,\, v_1 \overset{?}= s+w,\,v_2 \overset{?}= y+z\}. $ \\
 We see that each equation in the set
 $ \{v \overset{?}= h(v_5),\,v \overset{?}= v_1 + v_3,\, v_1 \overset{?}= v_2+v_4,\,v_3 \overset{?}= y+z,\\ v_2 \overset{?}=s, v_4 \overset{?}=w, v_5 \overset{?}=h(x)\}$
 is in the flattened form.}
 \end{example}
 \par
 \noindent
 \subsubsection{Update h-Depth Set}
 \hfill\par
 We also present a set of inference rules to update the h-depth set. These rules are performed eagerly. \ignore{We apply these rules immediately after applying any other rule in the inference system.}\\ \\
 \fbox{
\begin{minipage}{\textwidth}
\smallskip
\textbf{Update $h$ (U$h$)}
$$ \frac{ \{x \overset{?}= h(y)\}\,\cup \,\Gamma || \{(x,\,c_1),\,(y,\,c_2)\}\cup \triangle || \sigma}{ \{ x \overset{?}= h(y)\}\,\cup \,\Gamma || \{(x,\,c_1 ),\, (y,\,c_1 + 1)\}\cup \triangle || \sigma} 
\text{\hspace{35pt} If $ c_2 < (c_1 + 1)$}$$ 
\smallskip
\end{minipage}
}\\
\begin{example}
\emph{Solve the unification problem: $\{x\overset{?}=h(h(h(y)))\}$.\\ \\
We only consider the pair $\Gamma||\triangle$ since $\sigma$ does not change at this step.\\ \\
 $\{x\overset{?}=h(h(h(y)))\}|| \{(x,\,0),\,(y,\,0)\} \overset{FU^{+}}\Rightarrow \\
 %\\\{x\overset{?}=h(v),\ v\overset{?}= h(h(y))\}|| \{(x,\ 0),\ (y,\ 0),\ (v,\ 0)\}\overset{F.h}\Rightarrow \\
 \{x\overset{?}=h(v),\,v\overset{?}= h(v_1),\, v_1\overset{?}=h(y) \}|| \{(x,\,0),\,(y,\,0),\,(v,\,0),\,(v_1,\,0)\} \overset{Uh}\Rightarrow\\
\{x\overset{?}=h(v),\,v\overset{?}= h(v_1),\, v_1\overset{?}=h(y) \}|| \{(x,\,0),\,(y,\,0),\,(v,\,1),\,(v_1,\,0)\}\overset{Uh}\Rightarrow\\
\{x\overset{?}=h(v),\,v\overset{?}= h(v_1),\, v_1\overset{?}=h(y) \}|| \{(x,\,0),\,(y,\,0),\,(v,\,1),\,(v_1,\,1)\}\overset{Uh}\Rightarrow\\
\{x\overset{?}=h(v),\,v\overset{?}= h(v_1),\, v_1\overset{?}=h(y) \}|| \{(x,\,0),\,(y,\,2),\,(v,\,1),\,(v_1,\,1)\}\overset{Uh}\Rightarrow\\
%\{x\overset{?}=h(v),\,v\overset{?}= h(v_1),\, v_1\overset{?}=h(y) \}|| \{(x,\,0),\,(y,\,2),\,(v,\,1),\,(v_1,\,1)\}\overset{Uh}\Rightarrow\\
\{x\overset{?}=h(v),\,v\overset{?}= h(v_1),\, v_1\overset{?}=h(y) \}|| \{(x,\,0),\,(y,\,2),\,(v,\,1),\,(v_1,\,2)\}\overset{Uh}\Rightarrow\\
\{x\overset{?}=h(v),\,v\overset{?}= h(v_1),\, v_1\overset{?}=h(y) \}|| \{(x,\,0),\,(y,\,3),\,(v,\,1),\,(v_1,\,2)\},$\\
where $\overset{FU^{+}}\Rightarrow$ represents the application of $FU$ rule once or more than once.\\
\indent It is true that the h-Depth of $y$ is 3 since there are three edges labeled $h$ from $x$ to $y$, in the graph $\mathbb{G}(\Gamma)$.}
\end{example}
\smallskip
\noindent
 \fbox{
\begin{minipage}{\textwidth}
\smallskip
\textbf{Update $+$}
\begin{enumerate}
\item \textbf{Update Left $+$ (UL)} $$\frac{ \{x_1 \overset{?}= y_1 + y_2\}\,\cup \,\Gamma || \{(x_1,\,c_1),\, (y_1,\,c_2),\, (y_2,\,c_3)\} \cup \triangle || \sigma}{ \{x_1 \overset{?}= y_1 + y_2\}\,\cup \,\Gamma || \{(x_1,\, c_1),\,(y_1,\,c_1),\, (y_2,\,c_3)\} \cup \triangle || \sigma} 
 \text{\hspace{5pt} If $ c_2 < c_1 $} $$
\item \textbf{Update Right $+$ (UR)} $$\frac{ \{x_1 \overset{?}= y_1 + y_2\}\,\cup \,\Gamma || \{(x_1,\,c_1),\, (y_1,\,c_2),\, (y_2,\,c_3)\} \cup \triangle || \sigma}{ \{x_1 \overset{?}= y_1 + y_2\}\,\cup \,\Gamma || \{(x_1,\, c_1),\, (y_1,\,c_2),\,(y_2,\,c_1)\} \cup \triangle || \sigma} 
 \text{\hspace{5pt} If $ c_3 < c_1 $}$$
  \end{enumerate}
  \smallskip
 \end{minipage}
}
 \begin{example}
\emph{Solve the unification problem $\{z\overset{?}= x + y,\, x_1 \overset{?}= h(h(z)) \}.$ \\ \\
Similar to the last example, we only consider the pair $\Gamma||\triangle,$\\ \\
 $\{z\overset{?}= x + y,\, x_1 \overset{?}= h(h(z)) \}||\{(x,\,0),(y,\,0),(z,\,0),(x_1,\,0)\}\overset{FU}{\Rightarrow}\\
 %\{\ z\overset{?}= x + y,\ x_1 \overset{?}= h(v),\ v \overset{?}= h(z) \}||\{(x,\ 0),(y,\ 0),(z,\ 0),(x_1,\ 0),(v,\ 0)\}\overset{(Update\ h)^{*_{2}}}\Rightarrow
 \{z\overset{?}= x + y,\, x_1 \overset{?}= h(v),\, v \overset{?}= h(z) \}||\{(x,\,0),(y,\,0),(z,\,0),(x_1,\,0),(v,\,0)\}\overset{Uh^{+}}\Rightarrow\\
 \{z\overset{?}= x + y,\, x_1 \overset{?}= h(v),\, v \overset{?}= h(z) \}||\{(x,\,0),(y,\,0),(z,\,2),(x_1,\,0),(v,\,1)\}\overset{UL}{\Rightarrow}\\
 \{z\overset{?}= x + y,\, x_1 \overset{?}= h(v),\, v \overset{?}= h(z) \}||\{(x,\,2),(y,\,0),(z,\,2),(x_1,\,0),(v,\,1)\}\overset{UR}{\Rightarrow}\\
 \{z\overset{?}= x + y,\, x_1 \overset{?}= h(v),\, v \overset{?}= h(z) \}||\{(x,\,2),(y,\,2),(z,\,2),(x_1,\,0),(v,\,1)\}.$\\ 
 \indent Since there are two edges labeled $h$ from $x_1$ to $z$ in the graph $\mathbb{G}(\Gamma)$, the h-Depth of $z$ is 2. 
The h-Depths of $x$ and $y$ are also updated accordingly.}
\end{example}
Now, we resume the inference procedure for Example~\ref{ex:flatten} and also we consider $\triangle$ because it will be updated at this step.\\ 
$\{v \overset{?}= h(v_3),\, v_3 \overset{?}= h(x),\,v \overset{?}= v_1 + v_2,\, v_1 \overset{?}= s+w,\,v_2 \overset{?}= y+z\}||\\\{(x,\,0),(y,\,0),(z,\,0),(s,\,0),(w,\,0),(v,\,0),(v_1,\,0),(v_2,\,0),(v_3,\,0)\}
\overset{Uh}\Rightarrow\\
\{v \overset{?}= h(v_3),\, v_3 \overset{?}= h(x),\,v \overset{?}= v_1 + v_2,\, v_1 \overset{?}= s+w,\,v_2 \overset{?}= y+z\}||\\\{(x,\,1),(y,\,0),(z,\,0),(s,\,0),(w,\,0),(v,\,0),(v_1,\,0),(v_2,\,0),(v_3,\,0)\}
\overset{Uh}\Rightarrow\\
\{v \overset{?}= h(v_3),\, v_3 \overset{?}= h(x),\,v \overset{?}= v_1 + v_2,\, v_1 \overset{?}= s+w,\,v_2 \overset{?}= y+z\}||\\\{(x,\,1),(y,\,0),(z,\,0),(s,\,0),(w,\,0),(v,\,0),(v_1,\,0),(v_2,\,0),(v_3,\,1)\}
\overset{Uh}\Rightarrow\\
\{v \overset{?}= h(v_3),\, v_3 \overset{?}= h(x),\,v \overset{?}= v_1 + v_2,\, v_1 \overset{?}= s+w,\,v_2 \overset{?}= y+z\}||\\\{(x,\,2),(y,\,0),(z,\,0),(s,\,0),(w,\,0),(v,\,0),(v_1,\,0),(v_2,\,0),(v_3,\,1)\}.$\\
%%\vspace{0.5cm}
\par
\noindent
\subsubsection{Splitting Rule}
\hfill\par
This rule takes the homomorphism theory into account. In this theory, 
we can not solve equation $h(y)\overset{?}= x_1 + x_2$ unless $y$ can be written
as the sum of two new variables $y=v_1+v_2$, where $v_1$ and $v_2$ are in $\mathcal{NV}$. Without loss of generality we generalize it to $n$ variables $x_1,\ldots,x_n$.\\ \\
 \fbox{
\begin{minipage}{\textwidth}
\smallskip
\textbf{Splitting}
 $$\frac{\{x\overset{?}=h(y),x\overset{?}=x_1+ \cdots + x_n \}\cup \Gamma||\triangle|| \sigma}{\{x\overset{?}=h(y), y \overset{?}= v_1+ \cdots + v_n, x_1 \overset{?}= h(v_1), \ldots, x_n \overset{?}= h(v_n)\} \cup \Gamma||\triangle'|| \sigma}$$
where $n>1$, $x \neq y$ and $x \neq x_{i}$ for any $i$, $\triangle'= \{(v_1,\,0),\ldots,(v_n,\,0)\}\cup\triangle$, and $v_1,\ldots, v_n$ are fresh variables in $\mathcal{NV}$.
 \smallskip
  \end{minipage}
  }\\
%\red{come back to this example later}
 \begin{example} \emph{ Solve the unification problem
 $\{ h(h(x)) \overset{?}= y_1+y_2\}.$}\\ \par
 \emph{Still we only consider pair $\Gamma || \triangle$, since rules modifying $\sigma$ are not introduced yet.}\\ \\
$ \{ h(h(x)) \overset{?}= y_1+y_2 \}||\{(x,\,0),(y_1,\,0),(y_2,\,0)\}\, \overset{FBS^{+}}\Rightarrow$ \\
$\{ v \overset{?}=h(v_{1}), \, v_{1}\overset{?}=h(x),\, v\overset{?}=y_{1}+y_{2}\ \}||\{(x,\,0), (y_{1}, \, 0), (y_{2}, \, 0), (v,\,0), (v_{1}, \,0)\} \, \overset{Uh^{+}}\Rightarrow$\\
 $\{ v \overset{?}=h(v_{1}), \, v_{1}\overset{?}=h(x),\, v\overset{?}=y_{1}+y_{2}\ \}|||\{(x,\,2),(y_1,\,0),(y_2,\,0),(v,\,0),(v_1,\,1)\}, \,\overset{Splitting}\Rightarrow$\\
%\{ v \overset{?}=h(h(x)),\ v\overset{?}= y_1+y_2 \ \}||\{(x,\ 0),(y_1,\ 0),(y_2,\ 0),(v,\ 0)\}\overset{F.h +}\Rightarrow \\
%\{ v \overset{?}=h(v_1),\ v_1 \overset{?}= h(x),\ v\overset{?}= y_1+y_2 \ \}||\{(x,\ 0),(y_1,\ 0),(y_2,\ 0),(v,\ 0),(v_1,\ 0)\}\overset{(Update\ h)^{*_{3}} }\Rightarrow
%$\{ v \overset{?}=h(v_1),\, v_1 \overset{?}=h(x),\, v\overset{?}= y_1+y_2 \ \}||\{(x,\,0),(y_1,\,0),(y_2,\,0),(v,\,0),(v_1,\,0)\}, \,\overset{Uh^{+} }\Rightarrow$\\
 %$\{ v \overset{?}=h(v_1),\, v_1 \overset{?}= h(x),\, v \overset{?}= y_1+y_2 \ \}||\{(x,\,2),(y_1,\,0),(y_2,\,0),(v,\,0),(v_1,\,1)\}, \,\overset{Splitting}\Rightarrow$\\
 %\{ v \overset{?}=h(v_1),\ v_1 \overset{?}= v_{11} + v_{12},\ y_1\overset{?}= h(v_{11}),\ y_2 \overset{?}=h(v_{12}),\ v_1 \overset{?}= h(x) \ \}||\\\{(x,\ 2),(y_1,\ 0),(y_2,\ 0),(v,\ 0),(v_1,\ 1),(v_{11},\ 0),(v_{12},\ 0)\}\overset{(Update\ h)^{*_{4}} }\Rightarrow\\
 $\{ v \overset{?}=h(v_1),\, v_1 \overset{?}= v_{11} + v_{12},\, y_1\overset{?}= h(v_{11}),\, y_2 \overset{?}=h(v_{12}),\, v_1 \overset{?}= h(x) \, \}||\\\{(x,\,2),(y_1,\,0),(y_2,\,0),(v,\,0),(v_1,\,1),(v_{11},\,0),(v_{12},\,0)\}, \,\overset{Uh^{+} }\Rightarrow$\\
$ \{ v \overset{?}=h(v_1),\, v_1 \overset{?}= v_{11} + v_{12},\, y_1\overset{?}= h(v_{11}),\, y_2 \overset{?}=h(v_{12}),\, v_1 \overset{?}= h(x) \, \}||\\\{(x,\,2),(y_1,\,0),(y_2,\,0),(v,\,0),(v_1,\,1),(v_{11},\,1),(v_{12},\,1)\}, \,\overset{Splitting}\Rightarrow$\\
 %\{ v \overset{?}=h(v_1),\ y_1\overset{?}= h(v_{11}),\ y_2 \overset{?}=h(v_{12}),\ v_1 \overset{?}= h(x),\ x \overset{?}= v_{13} + v_{14}, \ v_{11}\overset{?}=h(v_{13}),\ v_{12}\overset{?}=h(v_{14}) \ \}||\\\{(x,\ 2),(y_1,\ 0),(y_2,\ 0),(v,\ 2),(v_1,\ 1),(v_{11},\ 1),(v_{12},\ 1),(v_{13},\ 0),(v_{14},\ 0)\}\overset{(Update\ h)^{*_{5}}}\Rightarrow\\
$ \{ v \overset{?}=h(v_1),\, y_1\overset{?}= h(v_{11}),\, y_2 \overset{?}=h(v_{12}),\, v_1 \overset{?}= h(x),\, x \overset{?}= v_{13} + v_{14}, \, v_{11}\overset{?}=h(v_{13}),\\v_{12}\overset{?}=h(v_{14}) \, \}||\{(x,\,2),(y_1,\,0),(y_2,\,0),(v,\,0),(v_1,\,1),(v_{11},\,1),(v_{12},\,1),(v_{13},\,0),(v_{14},\,0)\}\overset{Uh^{+}}\Rightarrow$
 $ \{ v \overset{?}=h(v_1),\, y_1\overset{?}= h(v_{11}),\, y_2 \overset{?}=h(v_{12}),\, v_1 \overset{?}= h(x),\, x \overset{?}= v_{13} + v_{14}, \, v_{11}\overset{?}=h(v_{13}),\\v_{12}\overset{?}=h(v_{14}) \, \}||\{(x,\,2),(y_1,\,0),(y_2,\,0),(v,\,0),(v_1,\,1),(v_{11},\,1),(v_{12},\,1),(v_{13},\,2),(v_{14},\,2)\}$.
%\emph{We pause this until other rules are introduced.}
\end{example}
\subsubsection{Trivial}
\hfill\par
The Trivial inference rule is to remove trivial equations in the given problem $\Gamma$. 
\\ \\
 \fbox{
\begin{minipage}{\textwidth}
\smallskip
$$\frac{\{ t\overset{?}= t\}\,\cup \,\Gamma || \triangle || \sigma}{\Gamma || \triangle || \sigma} $$ 
 \smallskip
  \end{minipage}
  }\\
\par \noindent
\subsubsection{Variable Elimination (VE)}
\hfill \par
The Variable Elimination rule is to convert the equations into assignments. In other words, it is used to find the most general unifier. \\ \\
 \fbox{
\begin{minipage}{\textwidth}
\smallskip
\begin{enumerate}
\item \textbf{VE1}$$\frac{\{ x \overset{?}= y \}\,\cup \, \Gamma || \triangle || \sigma}{\Gamma \{x \mapsto y\} || \triangle || \sigma \{ x \mapsto y \} \cup \{x \mapsto y \} } \text{\hspace{25pt} if $x$ and $y$ are distinct variables} $$
\item \textbf{VE2} $$\frac{\{ x \overset{?}= t \}\,\cup \,\Gamma || \triangle || \sigma}{\Gamma \{ x \mapsto t\} || \triangle || \sigma \{ x \mapsto t\}\cup\{x \mapsto t\} } \text{\hspace{25pt} if $t \notin \mathcal{V}$ and $x$ does not occur in $t$}$$
\end{enumerate}
\smallskip
  \end{minipage}
  }\\ \par
The rule VE2 is performed last after all other inference rules have been performed. The rule VE1 is performed eagerly.
\begin{example}
\emph{ Solve unification problem $\{ x \overset{?}= y ,\, x \overset{?}= h(z) \}$.} \\
\par\noindent
 $\{ x \overset{?}= y ,\, x \overset{?}= h(z) \}||\{(x,\,0),\,(y,\,0),\,(z,\,0)\}||\emptyset \overset{Uh}\Rightarrow \\
 \{ x \overset{?}= y ,\, x \overset{?}= h(z) \}||\{(x,\,0),\,(y,\,0),\,(z,\,1)\}||\emptyset \overset{VE1}\Rightarrow \\
 \{y \overset{?}= h(z)\}||\{(x,\,0),\,(y,\,0),\,(z,\,1)\}||\{ x \mapsto y\}\overset{VE2}\Rightarrow\\
 \emptyset||\{(x,\,0),\,(y,\,0),\,(z,\,1)\}||\{ x \mapsto h(z),\,y \mapsto h(z)\}.$\par
\emph{The substitution $\{ x \mapsto h(z),\,y \mapsto h(z)\}$ is the most general unifier of the given problem $\{ x \overset{?}= y ,\,x \overset{?}= h(z) \}$.}
 \end{example}
 \par
%%\vspace{0.5cm}
\noindent
\subsubsection{Decomposition (Decomp)}
\hfill\par
 The Decomposition rule decomposes an equation into several sub-equations if both sides' top symbol matches.\\ \\
  \fbox{
\begin{minipage}{\textwidth}
\smallskip
\textbf{Decomp}
$$\frac{\{ x \overset{?}= f( s_1,\ldots,s_n), x \overset{?}= f( t_1,\ldots,t_n)\}\cup \Gamma || \triangle || \sigma}{\{x \overset{?}= f( t_1, \ldots,t_n),\, s_1 \overset{?}= t_1,\ldots,s_n \overset{?}= t_n \}
\, \cup \,\Gamma|| \triangle || \sigma} \text{\hspace{25pt} if $f \neq +$ }$$
\smallskip
  \end{minipage}
  }\\
\begin{example}
\emph{Solve the unification problem $\{h(h(x)) \overset{?}= h(h(y))\}$.}\\ \\
$\{h(h(x)) \overset{?}= h(h(y))\}||\{(x,\,0),\,(y,\,0)\}||\emptyset \overset{Flatten^{+}}\Rightarrow \\
%\{v \overset{?}= h(h(x)),\ v \overset{?}= h(h(y))\}||\{(x,\ 0),\ (y,\ 0),(v,\ 0)\}||\emptyset\overset{F.h}\Rightarrow\\
%\{v \overset{?}= h(v_1),\ v_1 \overset{?}= h(x),\ v \overset{?}= h(h(y))\}||\{(x,\ 0),\ (y,\ 0),(v,\ 0),(v_1,\ 0)\}||\emptyset\overset{F.h}\Rightarrow\\
%\{v \overset{?}= h(v_1),\ v_1 \overset{?}= h(x),\ v \overset{?}= h(v_2),\ v_2 \overset{?}= h(y)\}||\{(x,\ 0),\ (y,\ 0),(v,\ 0),(v_1,\ 0),(v_2,\ 0)\}||\emptyset\overset{(Update \ h)^{*_{6}}}\Rightarrow
\{v \overset{?}= h(v_1),\,v_1 \overset{?}= h(x),\,v \overset{?}= h(v_2),\,v_2 \overset{?}= h(y)\}||\{(x,\,0),\,(y,\,0),(v,\,0),(v_1,\,0),(v_2,\,0)\}||\emptyset\overset{Uh^{+}}\Rightarrow
\{v \overset{?}= h(v_1),\,v_1 \overset{?}= h(x),\,v \overset{?}= h(v_2),\,v_2 \overset{?}= h(y)\}||\{(x,\,2),\,(y,\,2),(v,\,0),(v_1,\,1),(v_2,\,1)\}||\emptyset\overset{Decomp}\Rightarrow
\{v \overset{?}= h(v_1),\, v_1 \overset{?}= v_2,\, v_1 \overset{?}= h(x),\,v_2 \overset{?}= h(y)\}||\{(x,\,2),\,(y,\,2),(v,\,0),(v_1,\,1),(v_2,\,1)\}||\emptyset \overset{VE1}\Rightarrow
\{v \overset{?}= h(v_2),\, v_2 \overset{?}= h(x),\,v_2 \overset{?}= h(y)\}||\{(x,\,2),\,(y,\,2),(v,\,0),(v_1,\,1),(v_2,\,1)\}||\{v_1 \mapsto v_2\}\overset{Decomp}\Rightarrow
%\{v \overset{?}= h(v_2),\, v_2 \overset{?}= h(x),\,x\overset{?}= y\}||\{(x,\,2),\,(y,\,2),(v,\,0),(v_1,\,1),(v_2,\,1)\}||\\\{v_1 \mapsto v_2\}\overset{(V.E-2)^{*_{4}}}\Rightarrow
\{v \overset{?}= h(v_2),\, v_2 \overset{?}= h(x),\,x\overset{?}= y\}||\{(x,\,2),\,(y,\,2),(v,\,0),(v_1,\,1),(v_2,\,1)\}||\{v_1 \mapsto v_2\}\overset{VE2^{+}}\Rightarrow\\
\emptyset||\{(x,\,2),\,(y,\,2),(v,\,0),(v_1,\,1),(v_2,\,1)\}||\{v_1 \mapsto h(y),\,x\mapsto y,\,v \mapsto h(h(y)),\,v_2 \mapsto h(y)\},$\\
\emph{where $\{x\mapsto y\}$ is the most general unifier of the problem $\{h(h(x)) \overset{?}= h(h(y))\}$.}
\end{example}
\par
\noindent
 \subsubsection{AC Unification}
 \hfill \par
 The AC Unification rule calls an AC unification algorithm to unify the AC part of the problem. Notice that we apply AC unification only once when no other rule except VE-2 can apply.
 In this inference rule $\Psi$ represents the set of all equations with the $+$ symbol on the right hand side.
 $\Gamma$ represents the set of equations not containing a $+$ symbol. \textit{Unify} is a function that returns
 one of the complete set of unifiers returned by the AC unification algorithm. \textit{GetEqs} is a function that
 takes a substitution and returns the equational form of that substitution. In other words, 
 $GetEqs(\{x_1 \mapsto t_1, \ldots, x_n \mapsto t_n\}) = \{x_1 \overset{?}= t_1, \ldots, x_n \overset{?}= t_n\}$. \\ \\
   \fbox{
\begin{minipage}{\textwidth}
\smallskip
\textbf{AC Unification} $$\frac{\Psi \cup \Gamma || \triangle || \sigma}{GetEqs(\theta_{1}) \cup \Gamma|| \triangle || \sigma \vee\ldots \vee GetEqs(\theta_{n}) \cup \Gamma|| \triangle || \sigma }$$
\text{where $\textit{Unify}(\Psi) = \{\theta_{1}, \ldots, \theta_{n}\}.$}
\smallskip
  \end{minipage}
  }\smallskip\\ \\
  We illustrate the applicability of the AC unification rule using the example below. For convenience, we only consider $\Gamma$ from the problem.
  \begin{example} \emph{Solve the unification problem
  $\{x + y \overset{?}= z + y_{1}, \, x_{1} \overset{?}= x_{2} \}$, where $x, y, z, x_{1}, x_{2}$, and $y_{1}$ are pairwise distinct.\\ \\
  $\{x + y \overset{?}= z + y_{1}, \, x_{1} \overset{?}= x_{2} \} \,\overset{FBS}\Rightarrow$ 
  $\{ v \overset{?}= x + y, \, v \overset{?}= z + y_{1}\} \cup\{ x_{1} \overset{?}= x_{2} \}\,\overset{AC\,Unification}\Rightarrow$\\
  $\{ v \overset{?}= c_1 + c_2 + c_3+c_{4} , \, x \overset{?}= c_{1} + c_{2},\, y \overset{?}= c_{3} + c_{4}, z \overset{?}= c_{1} + c_{3},\, y_{1} \overset{?}= c_{2} + c_{4} \} \cup\{ x_{1} \overset{?}= x_{2} \}\, \vee$\\
  $\{ v \overset{?}= c + z + y,\, x \overset{?}= c + z, \, y_{1} \overset{?}= c + y \} \cup\{ x_{1} \overset{?}= x_{2} \}\, \vee$\\
    $\{ v \overset{?}= z + c + y,\, x \overset{?}= z+c, \, y_{1} \overset{?}= c + y \} \cup\{ x_{1} \overset{?}= x_{2} \}\, \vee$\\
    $\{ v \overset{?}= x + c + z,\, y \overset{?}= c+z, \, y_{1} \overset{?}= x+c \} \cup\{ x_{1} \overset{?}= x_{2} \}\, \vee$\\
     $\{ v \overset{?}= x + z + c,\, y \overset{?}= z+c, \, y_{1} \overset{?}= x+c \} \cup\{ x_{1} \overset{?}= x_{2} \}\, \vee$\\
     $\{ v \overset{?}= z + y_{1},\, x \overset{?}= z, \, y \overset{?}= y_{1} \} \cup\{ x_{1} \overset{?}= x_{2} \}\, \vee$\\
     $\{ v \overset{?}= y_{1} + z,\, x \overset{?}= y_{1}, \, y \overset{?}= z \} \cup\{ x_{1} \overset{?}= x_{2} \},$\\
     where $c, c_{1}, c_{2}, c_{3}$, and $c_{4}$ are constant symbols.}
  \end{example}
\ignore{Note that we have written the rule for one member of the complete set of AC unifiers of $\Psi$.
This will branch on every member of the complete set of AC unifiers of $\Psi$.}
\par
\noindent
\subsubsection{Occur Check (OC)}
\hfill \par
OC checks  if a variable on the left-hand side of an equation occurs on the other side of the equation. If it does, then the problem has no solution. This rule has the highest priority.\\ \\
   \fbox{
\begin{minipage}{\textwidth}
\smallskip
\textbf{OC}
$$\frac{\{ x \overset{?}= f( t_1,\ldots,t_n)\}\cup \Gamma || \triangle || \sigma}{ \bot}
\text{\hspace{35pt} If $x \in \mathcal{V}ar(f( t_1,\ldots,t_n)\sigma$)}$$
where $\mathcal{V}ar(f( t_1,\ldots,t_n)\sigma)$ represents set of all variables that occur in $f( t_1,\ldots,t_n)\sigma$.
\smallskip
  \end{minipage}
  }
\begin{example} \emph{Solve the following unification problem
$\{x \overset{?}= y, \, y \overset{?}= z + x \}.$\\ \\
$\{x \overset{?}= y, \, y \overset{?}= z+x\}||\{(x,\,0),(y,\,0),(z,\,0)\}||\emptyset \,\overset{VE1}\Rightarrow\\
\{ y \overset{?}= z + y\}||\{(x,\,0),(y,\,0),(z,\,0)\}||\{x \mapsto y \}\,\overset{OC}\Rightarrow\, \textit{Fail}.$\\
Hence, the problem $\{x \overset{?}= y, \, y \overset{?}= z + x\}$ has no solution.}
\end{example}
\par
\noindent
\subsubsection{Clash}
\hfill \par
This rule checks if the \textit{top symbol} on both sides of an equation is the same. If not, then there is no solution to the problem, unless one of them is $h$ and the other $+$. \\ \\
 \fbox{
\begin{minipage}{\textwidth}
\smallskip
\textbf{Clash}
$$\frac{\{ x \overset{?}= f( s_1,\ldots,s_m), \, x \overset{?}= g( t_1,\ldots,t_n)\}\cup \Gamma || \triangle || \sigma}{\bot} 
\text{\hspace{25pt} If $f \notin \{h,\,+\}$ or $g \notin \{h,\,+\}$ } $$ 
\smallskip
  \end{minipage}
  }
\begin{example} \emph{Solve the unification problem
$\{ f(x,\, y) \overset{?}= g(h(z)) \}$, where $f$ and $g$ are two distinct uninterpreted function symbols.\\ \\
$\{ f(x,\, y) \overset{?}= g(h(z)) \}||\{(x,\,0),(y,\,0),(z,\,0) \} ||\emptyset \overset{Flatten^{+}}\Rightarrow\\
%\{ v \overset{?}= f(x,\ y),\ v \overset{?}=h(h(z)) \ \}||\{(x,\ 0),(y,\ 0),(z,\ 0),(v,\ 0) \} ||\emptyset \overset{F.h}\Rightarrow\\
\{ v \overset{?}= f(x,\, y),\,v \overset{?}= g(v_1), v_1 \overset{?}=h(z) \} ||\{(x,\,0),(y,\,0),(z,\,0),(v,\,0),(v_1,\,0) \} ||\emptyset \overset{Uh^{+}}\Rightarrow\\
%\{ v \overset{?}= f(x,\ y),\ v \overset{?}= h(v_1),\ v_1 \overset{?}=h(z) \ \} ||\{(x,\ 0),(y,\ 0),(z,\ 0),(v,\ 0),(v_1,\ 1) \} ||\emptyset \overset{Update \ h}\Rightarrow
%\{ v \overset{?}= f(x,\ y),\ v \overset{?}= h(v_1),\ v_1 \overset{?}=h(z) \ \} ||\{(x,\ 0),(y,\ 0),(z,\ 1),(v,\ 0),(v_1,\ 1) \} ||\emptyset \overset{Update \ h}\Rightarrow
\{ v \overset{?}= f(x,\, y),\,v \overset{?}= h(v_1),\, v_1 \overset{?}=h(z) \} ||\{(x,\,0),(y,\,0),(z,\,1),(v,\,0),(v_1,\,0) \} ||\emptyset \overset{Clash}\Rightarrow \textit{Fail}.$
Hence, the problem $\{ f(x,\, y) \overset{?}= g(h(z)) \}$ has no solution.}
\end{example}
\par
\noindent
\subsubsection{Bound Check (BC)}
\hfill\par
The Bound Check is to determine if a solution exists within the bound $\kappa$, a given maximum h-depth of any variable in $\Gamma$.
If one of the h-depths in the h-depth set $\triangle$ exceeds the bound $\kappa$, then the problem has no solution. We apply this rule immediately after the rules of update h-depth set.\\ \\
 \fbox{
\begin{minipage}{\textwidth}
\smallskip
\textbf{BC}
$$ \frac{ \Gamma || \triangle || \sigma}{\bot} 
\text{\hspace{35pt} If $MaxVal(\triangle) > \kappa$}$$
\smallskip
  \end{minipage}
  }\\
\begin{example}
 \emph{Solve the following unification problem $\{ h(y) \overset{?}= y + x \}$.\\ \\
 Let the bound be $\kappa =2$.\\
 %be the bound on h-Depth of variables that occur in $\Gamma$.\\
 $\{ h(y) \overset{?}= y + x \}||\{(x,\,0),(y,\,0)\}||\emptyset\overset{FBS}\Rightarrow\\
 \{ v\overset{?}= h(y),\,v \overset{?}= y + x\}||\{(x,\,0),(y,\,0),(v,\,0)\}||\emptyset\overset{Uh}\Rightarrow\\
 \{ v\overset{?}= h(y),\,v \overset{?}= y + x\}||\{(x,\,0),(y,\,1),(v,\,0)\}||\emptyset\overset{Splitting}\Rightarrow\\
 %\{ v\overset{?}= h(y),\, y \overset{?}= v_{11} + v_{12},\, y \overset{?}= h(v_{11}),\, x \overset{?}= h(v_{12})||\{(x,\,0),(y,\,1),(v,\,0),(v_{11},\,0),(v_{12},\,0)\}||\emptyset \overset{(Update\,h)^{*_{7}}}\Rightarrow
 \{ v\overset{?}= h(y),\, y \overset{?}= v_{11} + v_{12},\, y \overset{?}= h(v_{11}),\, x \overset{?}= h(v_{12})||\{(x,\,0),(y,\,1),(v,\,0),(v_{11},\,0),(v_{12},\,0)\}||\emptyset \overset{Uh^{+}}\Rightarrow
  \{ v\overset{?}= h(y),\, y \overset{?}= v_{11} + v_{12},\, y \overset{?}= h(v_{11}),\, x \overset{?}= h(v_{12})||\{(x,\,0),(y,\,1),(v,\,0),(v_{11},\,2),(v_{12},\,1)\}||\emptyset \overset{\scriptsize{Splitting}}\Rightarrow
 %\{ v\overset{?}= h(y),\, v_{11} \overset{?}= v_{13} + v_{14},\, v_{11} \overset{?}= h(v_{13}),\, v_{12}\overset{?}= h(v_{14}),\, y \overset{?}= h(v_{11}),\, x \overset{?}= h(v_{12})||\\\{(x,\,0),(y,\,1),(v,\,0),(v_{11},\,2),(v_{12},\,1),(v_{13},\,0),(v_{14},\,0)\}||\emptyset \overset{(Update\,h)^{*_{8}}}\Rightarrow\\
 \{ v\overset{?}= h(y),\, v_{11} \overset{?}= v_{13} + v_{14},\, v_{11} \overset{?}= h(v_{13}),\, v_{12}\overset{?}= h(v_{14}),\, y \overset{?}= h(v_{11}),\, x \overset{?}= h(v_{12})||\\\{(x,\,0),(y,\,1),(v,\,0),(v_{11},\,2),(v_{12},\,1),(v_{13},\,0),(v_{14},\,0)\}||\emptyset \overset{Uh^{+}}\Rightarrow\\
  \{ v\overset{?}= h(y),\, v_{11} \overset{?}= v_{13} + v_{14},\, v_{11} \overset{?}= h(v_{13}),\, v_{12}\overset{?}= h(v_{14}),\, y \overset{?}= h(v_{11}),\, x \overset{?}= h(v_{12})||\\\{(x,\,0),(y,\,1),(v,\,0),(v_{11},\,2),(v_{12},\,1),
 (v_{13},\,3),(v_{14},\,2)\}||\emptyset \overset{BC}\Rightarrow \textit{Fail}.$\\
  Since $MaxVal(\triangle) = 3 > \kappa$, the problem $\{ h(y) \overset{?}= y + x\}$ has no solution within the given bound.}
\end{example}
\par
\noindent
\subsubsection{Orient}
\hfill\par
The Orient rule swaps the left side term of an equation with the right side term. In particular, when the left side term is a variable but not the right side term.\\ \\
 \fbox{
\begin{minipage}{\textwidth}
\smallskip
\textbf{Orient}
$$ \frac{ \{t \overset{?}{=} x\} \cup\Gamma || \triangle || \sigma}{\{x \overset{?}{=} t\} \cup\Gamma || \triangle || \sigma} 
\text{\hspace{35pt} If $t$ is not a variable}$$
\smallskip
  \end{minipage}
  }\\

\begin{algorithm}
\caption{AChUnify}
\paragraph{\textbf{Input:}}
\begin{itemize}
\item An equation set $\Gamma$, a bound $\kappa$, an empty set $\sigma$, and an empty h-depth set $\triangle$.
\end{itemize}
\paragraph{\textbf{Output:}}
\begin{itemize}
\item A complete set of $\kappa$-bounded $ACh$ unifiers $\{\sigma_1,\ldots, \sigma_n\}$ or $\bot$ indicating that the problem has no solution.
\end{itemize}
1: Apply $Trivial$ to eliminate equations of the form $t \overset{?}{=} t$.\\
2: Apply $OC$ to see if any variable on the left side occurs on the right. If yes, \\ \hspace*{0.4cm}then return $\bot$.\\
3: Flatten the set of equations $\Gamma$ using the flattening rules.\\
4: Update the h-depth set $\triangle$.\\
5: Apply $BC$ to see if $MaxVal(\triangle) > \kappa$. If yes, then return $\bot$.\\
6: Apply the $Orient$ rule.\\
7: Apply the $Splitting$ rule.\\
8: Apply the $Clash$ rule.\\
9: Apply the $Decomposition$ rule.\\
10: Apply the $AC\: Unification$ rule.\\
11: Finally, apply the $Variable\,Elimination$ rule and get the output.\\
\end{algorithm}
\section{Proof of Correctness}
\ignore{\begin{tcolorbox}[colback=yellow!100]
To check, if these proofs comply with the reviewer's comments!
 \end{tcolorbox}}
We prove that the proposed inference system is terminating, sound, and complete.
%The proposed inference process eventually halts.
\subsection{Termination}
Before going to present the proof of termination, we shall introduce few notation which will be used in the subsequent sections. For two set triples, $ \Gamma||\triangle|| \sigma$ and $\Gamma'||\triangle'|| \sigma'$,
\begin{itemize}
\item $\Gamma||\triangle|| \sigma \Rightarrow_{\Iach} \Gamma'||\triangle'|| \sigma'$, means that the set triple $\Gamma'||\triangle'|| \sigma'$ is deduced from $\Gamma||\triangle|| \sigma$ by applying a rule from $\Iach$ once. We call it as one step.
\item $\Gamma||\triangle|| \sigma \overset{*}{\Rightarrow_{\Iach}} \Gamma'||\triangle'|| \sigma'$, means that the set triple $\Gamma'||\triangle'|| \sigma'$ is deduced from $\Gamma||\triangle|| \sigma$ by zero or more steps
\item $\Gamma||\triangle|| \sigma \overset{+}{\Rightarrow_{\Iach}} \Gamma'||\triangle'|| \sigma'$, means that the set triple $\Gamma'||\triangle'|| \sigma'$ is deduced from $\Gamma||\triangle|| \sigma$ by one or more steps
\end{itemize}
As we notice, AC unification divides $\Gamma||\triangle|| \sigma$ into finite number of branches $\Gamma_{1}||\triangle_{1}|| \sigma_{1}$ and so on $\Gamma_{n}||\triangle_{n}|| \sigma_{n}$. Hence, for a triple $\Gamma||\triangle|| \sigma$, after applying some inference rules, the result is a disjunction of triples $\bigvee_{i}(\Gamma_{i}||\triangle_{i}|| \sigma_{i})$. Accordingly, we introduce the following notation:
\begin{itemize}
\item $\striple {\scriptstyle \implies_{\Iach}} \bigvee_{i}(\Gamma_{i}||\triangle_{i}|| \sigma_{i})$, where $\bigvee_{i}(\Gamma_{i}||\triangle_{i}|| \sigma_{i})$ is a disjunction of triples, means that the set triple $\striple$ becomes $\bigvee_{i}(\Gamma_{i}||\triangle_{i}|| \sigma_{i})$ with an application of a rule once.
\item $\striple \overset{+} {\scriptstyle \implies_{\Iach}} \bigvee_{i}(\Gamma_{i}||\triangle_{i}|| \sigma_{i})$ means that $\striple$ becomes $\bigvee_{i}(\Gamma_{i}||\triangle_{i}|| \sigma_{i})$ after applying some inference rules once or more than once.
\item $\striple \overset{*} {\scriptstyle \implies_{\Iach}} \bigvee_{i}(\Gamma_{i}||\triangle_{i}|| \sigma_{i})$ means that $\striple$ becomes $\bigvee_{i}(\Gamma_{i}||\triangle_{i}|| \sigma_{i})$ after applying some inference rules zero or more times.
\end{itemize}
\par
Here, we define a measure of $\striple$ for proving termination:
\begin{itemize}
\ignore{\item Let $|\Gamma|$ be the cardinality of $\Gamma$. Since $|\Gamma|$ is a natural number, $|\Gamma|$ with $\leq$ on natural numbers is a well-founded ordering.}
\item Let $Sym(\Gamma)$ be a multi-set of non-variable symbols occurring in $\Gamma$. The standard ordering of $|Sym(\Gamma)|$ based on natural numbers is a well-founded ordering on the set of equations.
\ignore{\item Let $Top(\Gamma)$ be the set of all top symbols of $\Gamma$. Since $|Top(\Gamma)|$ is a natural number, $|Top(\Gamma)|$ with $\leq$ on natural numbers is a well-founded ordering.}
\item Let $\kappa$ be a natural number. Let $\overline{h_{d}}(\Gamma):= \{ (\kappa+1) \,\text{-} \,h_{d}(x, \Gamma)\mid(x, h_{d}(x, \Gamma))\in h_{d}(\Gamma)\}$ be a multi-set. Since every element of the set is a natural number, the multi-set order for $\overline{h_{d}}(\Gamma)$ is a well-founded ordering.
\item Let $p$ be a number of non-solved variables in $\Gamma$. 
\item Let $m$ be the number of equations of the form $f(t)\overset{?}=x$ in $\Gamma$.
\item Let $n$ be the number of +-equations with $x$ occurring on the left side, i.e, $x = x_1+\cdots+x_n$.
%\item Let $n$ be the number of equations of the form $t_1 \overset{?}= t_2$ where $t_1$ and $t_2$ are not variables.
%Obviously, the standard ordering of $|Mvar(\Gamma)|$ is a well-founded ordering on the set of equations.
\end{itemize}
\par
Then we define the measure of $\striple$ as the following:
%$$\mstriple = (|Mvar(\Gamma)|, \overline{h_{d}}(\Gamma), |Sym(\Gamma)|, |\Gamma|, |Top(\Gamma)|).$$
$$\mstriple = (n, |Sym(\Gamma)|, p, m, |\Gamma|, \overline{h_{d}}(\Gamma)).$$
Since each element in this tuple with its corresponding order is well-founded, the lexicographic order on this tuple is well-founded as well.
\ignore{As we apply the rules of flattening whenever needed, we assume that the equations of $\Gamma$ are in flattened form.}

We show that $\mstriple$ decreases with the application of each rule of the inference system $\Iach$ except AC unification. The reader can see the proof of termination of the AC unification in~\cite{FF}.

%\begin{lemma}[\textbf{Termination}]\label{thm:termination}
%\emph{There is no infinite sequence of inference rules.}
%\end{lemma}
\begin{lemma}
\label{lem:ter1}
\emph{Let $\striple$ and $\striplep$ be two set triples, where $\Gamma$ and $\Gamma'$ are in flattened form, such that $\striple \Rightarrow_{\Iach} \striplep$. Then $\mstriple > \mstriplep$.}
\end{lemma}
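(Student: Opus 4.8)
The plan is to prove Lemma~\ref{lem:ter1} by exhaustive case analysis on the inference rule applied in the step $\striple \Rightarrow_{\Iach} \striplep$, showing in each case that the measure tuple $\mstriple = (n, |Sym(\Gamma)|, p, m, |\Gamma|, \overline{h_{d}}(\Gamma))$ strictly decreases in the lexicographic order. Since the lexicographic order on the product of well-founded orders is well-founded (as already noted in the excerpt), it suffices to find, for each rule, the leftmost component that strictly decreases and to check that every component to its left is non-increasing. I would organize the rules into natural groups: the flattening rules (FBS, FL, FR, FU), the h-depth update rules (U$h$, UL, UR), and the remaining solving rules (Trivial, VE1, VE2, Decomp, Orient, Splitting), deferring AC Unification and the failure rules (OC, Clash, BC) since those either terminate immediately with $\bot$ or are handled by the cited external result.

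For the main body of cases I would argue as follows. The \textbf{Splitting} rule is the one that decreases the first component $n$: it consumes a $+$-equation $x \overset{?}= x_1 + \cdots + x_n$ (removing one equation with the split variable on its left) while introducing only $h$-equations and one new $+$-equation $y \overset{?}= v_1 + \cdots + v_n$ whose left variable $y$ is fresh — but crucially, here I need the side condition and the structure of flattened form to ensure the net count of "+-equations with their left variable occurring on the left'' strictly drops; this is where I expect to have to be careful about exactly how $n$ is counted (per-equation vs. per-variable) and whether a later Splitting on $y$ could reintroduce the same count. For the \textbf{flattening rules}, $n$ is unchanged (no $+$-equation's left-variable status changes — FL and FR keep the same left side $t$ and only peel off a summand into a fresh VarVar-style equation), $|Sym(\Gamma)|$ is non-increasing, and the number of non-solved variables $p$... actually flattening introduces a fresh variable, so I expect flattening to decrease $|Sym(\Gamma)|$ itself or, failing that, some component before $p$; the natural choice is that each flattening rule strictly reduces the number of non-variable symbols "nested below the top'' — so I may need to replace the crude $|Sym(\Gamma)|$ reasoning with a more refined bookkeeping, or argue $|Sym(\Gamma)|$ is literally unchanged while $|\Gamma|$-type components move. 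For \textbf{VE1/VE2} the substitution removes a variable from the unsolved set, decreasing $p$ (with $n$ and $|Sym|$ non-increasing since applying a substitution that maps a variable to a variable or to an existing term does not create new non-variable symbols); \textbf{Trivial} decreases $|\Gamma|$ (and does not touch $n, |Sym|, p, m$); \textbf{Orient} decreases $m$, the number of equations $f(t) \overset{?}= x$, while leaving $n, |Sym|, p$ fixed; \textbf{Decomp} decreases $|Sym(\Gamma)|$ since it strips a matching top symbol $f$ from both sides, replacing $f(\bar s) \overset{?}= f(\bar t)$ by $s_i \overset{?}= t_i$ and keeping one copy $x \overset{?}= f(\bar t)$ — net loss of at least one non-variable symbol; and the \textbf{h-depth update rules} change only $\triangle$, increasing some $h_d(x,\Gamma)$ value and hence \emph{decreasing} the corresponding entry $(\kappa+1) - h_d(x,\Gamma)$ in the multi-set $\overline{h_d}(\Gamma)$, which is the very last component, while all five earlier components are literally untouched (these rules only modify $\triangle$).

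The main obstacle, as flagged above, is getting the interaction between \textbf{Splitting} and the first coordinate $n$ exactly right: Splitting both removes a $+$-equation and adds one, so the decrease in $n$ relies on a careful reading of the definition ("number of $+$-equations with $x$ occurring on the left side'' for the specific variables in play) together with the fact that the newly introduced $+$-equation has a \emph{fresh} left variable in $\mathcal{NV}$ — I would want to argue that the measure $n$ is really counting something like "occurrences of the split pattern that still need resolving'', and that the fresh $+$-equation, while contributing to $|\Gamma|$, is compensated because Splitting also consumes the $h$-equation $x \overset{?}= h(y)$'s role in triggering further splits — so I may need to refine $n$ or add a tie-break. A secondary subtlety is confirming that the h-depth update rules, which can fire after \emph{any} structural rule, never increase $n, |Sym|, p, m,$ or $|\Gamma|$ — this is immediate since they only rewrite pairs in $\triangle$ — and that the structural rules never increase $\overline{h_d}(\Gamma)$ in a way that matters, which is fine because $\overline{h_d}$ is the last coordinate and we only need the earlier coordinates to witness the decrease there. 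Once every rule is checked, the lemma follows, and I would close by remarking that termination of the whole procedure (including AC Unification's finite branching) then follows by combining this lemma with the cited termination of AC unification from~\cite{FF}.
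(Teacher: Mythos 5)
Your proposal follows essentially the same route as the paper: a case analysis over the rules of $\Iach$ (excluding AC Unification), exhibiting for each rule the leftmost strictly decreasing component of $\mstriple$. The paper assigns exactly the components you do: Trivial decreases $|\Gamma|$, Decomposition decreases $|Sym(\Gamma)|$, the update rules decrease $\overline{h_{d}}(\Gamma)$ (the only component they touch), Orient decreases $m$, Variable Elimination decreases $p$, and Splitting decreases $n$. One piece of effort you can drop: the flattening rules need not be analyzed at all, since the lemma's hypothesis that both $\Gamma$ and $\Gamma'$ are in flattened form makes their side conditions (a non-variable subterm in a forbidden position) unsatisfiable; the paper silently omits them for this reason.

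The difficulty you flag for Splitting, however, is genuine, and you should not expect the paper to resolve it: its entire treatment of that case is the assertion that $n$ ``decreased by one.'' Under the literal reading of $n$ as the number of $+$-equations with a variable on the left, Splitting removes $x \overset{?}= x_1 + \cdots + x_n$ but introduces $y \overset{?}= v_1 + \cdots + v_n$, so $n$ is unchanged, while $|Sym(\Gamma)|$ \emph{increases} by the $n$ new $h$ symbols, so no earlier component rescues the measure. Under the reading you propose (the number of Splitting redexes, i.e.\ variables heading both an $h$-equation and a $+$-equation), the removed redex at $x$ can be replaced by new ones: the fresh equations $x_i \overset{?}= h(v_i)$ pair with any pre-existing $+$-equations $x_i \overset{?}= \cdots$, and $y \overset{?}= v_1 + \cdots + v_n$ pairs with any pre-existing $y \overset{?}= h(\cdot)$. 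So your instinct that $n$ must be refined, or the first component replaced by something tied to the bounded $h$-depths (so that newly created redexes sit strictly deeper and are controlled by $\kappa$), is exactly the missing ingredient; the paper's proof of this case is no more detailed than your sketch and does not supply it.
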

\begin{proof}
\textbf{Trivial.} The cardinality of $\Gamma$, $|\Gamma |$, decreases while other components of the measure either stays the same or decreases.
Hence, $\mstriple > \mstriplep$.\\
\textbf{Decomposition.} The number of $f$ symbols decreased by one, and hence $|Sym(\Gamma)|$ decreases while $p$ stays the same. Hence, $\mstriple > \mstriplep$.\\
\textbf{Update h-Depth Set.} %Notice that the application of any of the update rules does not change $\Gamma$ but update the set of pairs $\triangle$. 
On application of one of the update rules, increases h-depth of a variable $x$ from $n$ to $n+1$. However, $\kappa \text{-} n >\kappa \text{-} (n+1)$. Which means that $\overline{h_{d}}(\Gamma)$ decreases while the other components stay the same. Hence, $\mstriple > \mstriplep$.\\
\noindent
\textbf{Splitting.} On the application of the Splitting rule, $n$, the number of +-equations with $x$ on the left side decreased by one. So, $\mstriple > \mstriplep$.\\
\textbf{Orient.} It is not difficult to see the fact that $m$ decreases. \\
\textbf{Variable elimination.} Of course, the number of non-solved variables decreases in the application of this rule.\\
\end{proof}
\begin{theorem}[\textbf{Termination}]
\emph{For any set triple $\striple$, there is a set triple $\striplep$ such that $\Gamma||\triangle|| \sigma \overset{*}{\Rightarrow_{\Iach}} \Gamma'||\triangle'|| \sigma'$ and none of the rules $\Iach$ can be applied on $\striplep$.}
\end{theorem}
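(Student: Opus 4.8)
The plan is to combine Lemma~\ref{lem:ter1} with the well-foundedness of the measure $\mstriple$, while separately handling the two families of rules not covered by that lemma: the flattening rules and AC~Unification. The skeleton is a standard ``no infinite derivation'' argument: suppose for contradiction that there is an infinite derivation starting from $\striple$. Since AC~Unification is applied at most once (it fires only when no rule other than VE2 applies, and it is immediately followed by Variable Elimination and termination), we may discard the finite prefix up to and including any AC~Unification step and assume the infinite derivation uses no AC~Unification. Likewise, the flattening rules can only fire finitely often: each of FBS, FL, FR, FU strictly reduces the number of non-variable subterm occurrences that sit in a position forbidden by the flattened form (equivalently, $|Sym(\Gamma)|$ restricted to ill-formed positions strictly decreases and nothing else relevant grows), so after a finite prefix the problem is in flattened form and stays there. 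From that point on, every step is an application of one of the rules treated in Lemma~\ref{lem:ter1} (Trivial, Decomposition, an Update rule, Splitting, Orient, Variable Elimination) or one of the failure rules OC, Clash, BC which immediately yield $\bot$ and halt.

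First I would make precise the claim that the flattening rules cannot be applied infinitely often, by exhibiting a component that they strictly decrease and that the non-flattening rules never increase; the natural choice is the multiset $Sym(\Gamma)$ together with the observation that each flattening rule replaces a compound subterm by a fresh variable, strictly decreasing $|Sym(\Gamma)|$. (One must check that introducing the new equation $v \overset{?}{=} t_i$ does not reintroduce symbols: it moves the symbols of $t_i$ rather than copying them, so $|Sym(\Gamma)|$ genuinely drops.) Then I would invoke Lemma~\ref{lem:ter1}: once the problem is flattened, each applicable rule of $\Iach$ other than AC~Unification sends $\striple$ to some $\striplep$ with $\mstriple > \mstriplep$ in the lexicographic order. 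Since that order is well-founded (each component — $n$, $|Sym(\Gamma)|$, $p$, $m$, $|\Gamma|$, and the multiset $\overline{h_d}(\Gamma)$ — is well-founded under its stated ordering, and a finite lexicographic product of well-founded orders is well-founded), there is no infinite strictly decreasing chain, contradicting the assumed infinite derivation. Hence every derivation is finite, and a derivation that cannot be extended ends in a set triple $\striplep$ to which no rule of $\Iach$ applies (possibly $\bot$), which is exactly the statement.

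The main obstacle I anticipate is the bookkeeping around the eagerly-applied rules and the interaction between flattening and the measure: one has to argue that interleaving flattening steps with other steps does not break the descent, i.e.\ that the non-flattening rules do not increase $|Sym(\Gamma)|$ in a way that could fuel unboundedly many further flattening steps. This is true because the only rules that create new compound terms are Variable Elimination (which substitutes terms already present, not fresh compound structure beyond what $\sigma$ records) and Splitting (which introduces $h(v_i)$ terms — finitely many, and these are already flattened), so each ``round'' of flattening is triggered only finitely often. A second, minor subtlety is that the failure rules OC, Clash, BC produce $\bot$, on which no rule applies; the theorem's phrasing ``none of the rules $\Iach$ can be applied on $\striplep$'' must be read as allowing $\striplep = \bot$, which is consistent with the earlier stipulation that $\bot$ is a terminal set triple. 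Beyond these points the argument is routine, since all the real work has been isolated into Lemma~\ref{lem:ter1}.
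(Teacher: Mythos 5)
Your proposal follows the same core strategy as the paper---termination via the well-founded lexicographic measure $\mstriple$ and the single-step descent established in Lemma~\ref{lem:ter1}---but the paper's own proof consists of the single sentence ``by induction on Lemma~\ref{lem:ter1},'' so everything you add about the rules \emph{not} covered by that lemma (flattening, AC Unification, and the failure rules OC, Clash, BC) is genuinely extra work that the paper silently omits. Your treatment of AC Unification (applied at most once, hence absorbable into a finite prefix) and of the failure rules (they yield the terminal triple $\bot$, which the theorem's conclusion must be read as permitting) is correct. One detail in your flattening argument is wrong as stated, though you also supply the fix: the rules FBS, FL, FR, FU do \emph{not} decrease $|Sym(\Gamma)|$. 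For instance, FL turns $\{t \overset{?}{=} t_1 + t_2\}$ into $\{t \overset{?}{=} v + t_2,\; v \overset{?}{=} t_1\}$, which merely relocates the symbols of $t_1$ into a new equation; the multiset of non-variable symbols is unchanged, so your parenthetical claim that ``$|Sym(\Gamma)|$ genuinely drops'' does not hold. The alternative you mention in passing---counting non-variable occurrences at positions that violate the flattened form---is the right measure: each flattening step promotes exactly one such occurrence to the root of a fresh equation without creating new ill-formed occurrences, so that count strictly decreases and flattening terminates. With that substitution your argument is complete, and it is in fact more careful than the paper's, which never checks that the components of $\mstriple$ (e.g.\ the first component $n$, the number of $+$-equations) are even non-increasing under flattening---they need not be, which is precisely why flattening must be handled by a separate preliminary argument as you do.
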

\begin{proof}
By induction on Lemma~\ref{lem:ter1}, this theorem can be proved.
%%%%%%%%commenting out for now%%%%%%%%%

\ignore{First notice that at some point all the Decomposition rules not involving $h$ will eventually be performed.
That is because when we perform Decomposition on the top symbol $f$, one occurrence of $f$ disappears, and none of the rules can make them come back. So from now on, we assume that all such rules have been performed.

Let us call $x \overset{?}= h(y)$ an {\em $h$-rule of depth $i$} if $h_d(x) = i$. A splitting rule involving $x \overset{?}= h(y)$ and $x \overset{?}= v_1 +\cdots+ v_n$ 
is called a {\em splitting rule of depth $i$}. We say that the variable $x$ is split.
A Decomposition rule involving $x \overset{?}= h(y)$ and $x \overset{?}= h(z)$ is called an
{\em $h$-Decomp rule of depth $i$}. We say that $x$ is $h$-decomposed.  

We first show that at some point all splitting rules of depth 0 and $h$-Decomp rules of depth 0 will have been performed. We notice that after AC Unification is called for the first time, any variable appearing in a VarVar equation or a $+$-equation will either appear exactly once in the left-hand side of one of those equations and never on the right-hand side, or else it will never appear on the left-hand side of one of those equations. This is because the AC unification rule creates equations from substitutions, which have this property. Also, the AC unification rule, the VE1 rule, and the Trivial rule will not change this property. Therefore when a variable $x$ is split, all of the occurrences of $x$ in $+$ equations and VarVar equations disappear. If $x$ has depth 0, then it cannot occur on the right-hand side of an $h$-equation. So after this split, variable $x$ cannot be split anymore. Also, $h$ can then only be $h$-decomposed a finite number of times, because each time eliminates an $h$-equation with $x$ on the left-hand side, and no new ones of depth 0 can be created.

We want to show by induction that all splits and $h$-Decomps will eventually be performed. Suppose that all of them at depth $i$ has been performed at some point. We will show that at some point all of them at depth $i+1$ will be performed.
 Again we notice that after AC Unification is called for the first time, any variable appearing in a VarVar equation or a $+$-equation will either appear exactly once in the left-hand side of one of those equations and never on the right-hand side, or else it will never appear on the left-hand side of one of those equations. This is because the AC unification rule creates equations from substitutions, which have this property. Also, the AC unification rule, the VE1 rule, and the Trivial rule will not change this property. Therefore when a variable $x$ is split, all of the occurrences of $x$ in $+$ equations and VarVar equations disappear. If $x$ has depth $i+1$, then it cannot occur on the right-hand side of an $h$-equation that can possibly be split again. This is a result of our induction hypothesis. So after this split, variable $x$ cannot be split anymore. Also, $h$ can then only be $h$-decomposed a finite number of times, because each time eliminates an $h$-equation with $x$ on the left-hand side, and no new ones of depth $i+1$ can be created.
 
 Because of our Bound Check rule, all splits and $h$-Decomp rules will eventually be performed. From now on, we assume that they have all been performed. Assume all VE1 rules and Trivial rules that currently exist been performed. Then suppose we perform AC Unification. This will not create any applications of Trivial or VE1. Therefore the process will be finished here. The only thing left is the performance of VE2 rules at the end, which trivially halts because they reduce the number of equations. }
\end{proof}
\subsection{Soundness}
\noindent In this Section, we show that our inference system $\Iach$ is truth-preserving.
\begin{lemma}\label{lem:soundness1}
\emph{Let $\striple$ and $\striplep$ be two set triples such that $\striple \Rightarrow_{\Iach} \striplep$ via all the rules of $\Iach$ except AC unification. Let $\theta$ be a substitution such that $\theta \models \striplep$. Then $\theta \models \Gamma ||\triangle||\sigma$.}
\end{lemma}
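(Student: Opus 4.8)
The plan is to argue by cases on which inference rule is used to derive $\striplep$ from $\striple$, and to show in each case that an arbitrary $\theta$ satisfying $\striplep$ must also satisfy $\striple$. Recall that $\theta \models \striplep$ means $\theta$ satisfies $\sigma'$, satisfies every equation in $\Gamma'$, and $MaxVal(\triangle') \leq \kappa$; we must deduce the same three facts for the unprimed triple. The h-depth-set component is the easy part: every rule either leaves $\triangle$ unchanged, or adds pairs of the form $(v,0)$ for fresh variables (flattening, Splitting), or \emph{increases} an h-depth value (the Update rules). In the first two situations $MaxVal(\triangle) \leq MaxVal(\triangle') \leq \kappa$ trivially, and in the Update case $\triangle$ is pointwise below $\triangle'$, so again $MaxVal(\triangle)\leq MaxVal(\triangle')\leq\kappa$. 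So for every rule the bound condition is inherited for free, and the work is entirely about the equations and the substitution.

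Next I would dispatch the rules that do not touch $\sigma$ and only manipulate $\Gamma$. For the \textbf{Flattening} rules (FBS, FL, FR, FU) and the \textbf{Splitting} rule, the new variables $v, v_1,\dots,v_n$ are fresh, so $\theta\models\striplep$ in particular forces, e.g., $v\theta = t_1\theta$ and $v\theta = t_2\theta$ in the FBS case; substituting, $t_1\theta = t_2\theta$, which is exactly the removed equation, while all other equations of $\Gamma$ are literally present in $\Gamma'$. The remaining flattening rules and Splitting are the same kind of back-substitution argument (for Splitting one uses the homomorphism identity: from $y\theta = v_1\theta+\cdots+v_n\theta$ and $x_i\theta = h(v_i\theta)$ one recovers $x\theta = h(y\theta) = h(v_1\theta)+\cdots+h(v_n\theta) = x_1\theta+\cdots+x_n\theta$, using $=_{ACh}$). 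For \textbf{Trivial}, $\Gamma = \{t\overset{?}{=}t\}\cup\Gamma'$ and $t\theta=t\theta$ holds vacuously. For \textbf{Orient}, $\Gamma$ and $\Gamma'$ have the same equation up to symmetry of $\overset{?}{=}$, so solutions coincide. For \textbf{Decomp}, from $x\theta = f(t_1\theta,\dots,t_n\theta)$ and $s_i\theta = t_i\theta$ for all $i$ one gets $x\theta = f(s_1\theta,\dots,s_n\theta)$, recovering the second deleted equation; the first is kept. The Update h-depth rules change nothing but $\triangle$, already handled.

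The cases requiring the most care are the \textbf{Variable Elimination} rules, since these are the only non-AC rules that modify $\sigma$. In \textbf{VE1}, $\Gamma = \{x\overset{?}{=}y\}\cup\Gamma_0$ and $\Gamma' = \Gamma_0\{x\mapsto y\}$, with $\sigma' = \sigma\{x\mapsto y\}\cup\{x\mapsto y\}$. Since $\theta\models\sigma'$ and $\{x\mapsto y\}\subseteq\sigma'$, we get $x\theta = y\theta$, i.e.\ $\theta$ satisfies the deleted equation; moreover $x\theta=y\theta$ means $\theta = \{x\mapsto y\}\theta$ on all relevant variables, so for any term $u$ we have $u\{x\mapsto y\}\theta = u\theta$ — hence $\theta$ satisfying $\Gamma_0\{x\mapsto y\}$ implies $\theta$ satisfies $\Gamma_0$, and likewise $\theta\models\sigma\{x\mapsto y\}$ together with $x\theta=y\theta$ gives $\theta\models\sigma$. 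VE2 is identical with $t$ in place of $y$, using that $x$ does not occur in $t$ so the substitution is idempotent. The main obstacle throughout is being careful about the bookkeeping in these VE cases — precisely, the lemma "$x\theta = t\theta$ implies $u\{x\mapsto t\}\theta =_{ACh} u\theta$ for all $u$", and that the new bindings in $\sigma'$ do not constrain $\theta$ beyond what the old $\sigma$ plus the eliminated equation already did. Once that substitution lemma is isolated and proved by structural induction on $u$, every case closes mechanically.
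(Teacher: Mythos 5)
Your proposal is correct and follows essentially the same route as the paper's proof: a case analysis over the rules with back-substitution arguments, using the homomorphism identity for Splitting and the observation that $x\theta = t\theta$ lets you undo the instantiation $\{x \mapsto t\}$ in the Variable Elimination cases. If anything, your version is more thorough than the paper's --- you explicitly handle the $MaxVal(\triangle) \leq \kappa$ condition, the flattening and Orient rules, and you isolate the substitution lemma $x\theta = t\theta \implies u\{x \mapsto t\}\theta =_{ACh} u\theta$ that the paper's VE cases use only implicitly.
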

\begin{proof}
%We prove this for the Splitting rule.\\
\textbf{Trivial.} It is trivially true.\\
\ignore{Splitting.

 $$\frac{\{w\overset{?}=h(y),w\overset{?}=x_1+ \cdots + x_n \}\cup \Gamma||\triangle|| \sigma}{\{w\overset{?}=h(y), y \overset{?}= v_1+ \cdots + v_n, x_1 \overset{?}= h(v_1), \ldots, x_n \overset{?}= h(v_n)\} \cup \Gamma||\triangle'|| \sigma}$$
where $n>1$, $y \neq w$, $\triangle'= \{(v_1,\,0),\ldots,(v_n,\,0)\}\cup\triangle\}$, and $v_1,\ldots, v_n$ are fresh variables in $\mathcal{NV}$.

\ignore{$$\frac{\{w\overset{?}=h(y),w\overset{?}=x_1+x_2 \}\cup \Gamma||\triangle|| \sigma}{\{w\overset{?}=h(y), y \overset{?}= v_1+v_2, x_1 \overset{?}= h(v_1), x_2 \overset{?}= h(v_2)\} \cup \Gamma||\triangle'|| \sigma}$$}}
\textbf{Splitting.} Let $\theta$ be a substitution. Assume that $\theta$ satisfies $\{w\overset{?}=h(y), y \overset{?}= v_1+ \cdots + v_n, x_1 \overset{?}= h(v_1), \ldots, x_n \overset{?}= h(v_n)\} \cup \Gamma$.
Then we have that $w\theta\overset{?}=h(y)\theta$, $y\theta\overset{?}= (v_1+\cdots +v_n)\theta$, $x_1\theta \overset{?}= h(v_1)\theta$, \ldots, $x_n \theta \overset{?}= h(v_n)\theta$.
This implies that $w\theta\overset{?}=h(y\theta)$, $y\theta\overset{?}= v_1\theta+\cdots+v_n\theta$, $x_1\theta \overset{?}= h(v_1\theta)$ \ldots $x_n \theta \overset{?}= h(v_n\theta)$.
In order to prove that $\theta$ satisfies $\{w\overset{?}=h(y),w\overset{?}=x_1+\cdots+x_n \}$, it is enough to prove $\theta$
satisfies the equation $w\overset{?}=x_1+\cdots+x_n$. By considering the right side term $x_1+\cdots+x_n$ and after applying the substitution, we get $(x_1+\cdots+x_n)\theta \overset{?}= x_1\theta +\cdots+ x_n \theta \overset{?}= h(v_1\theta) +\cdots+ h(v_n\theta)$.
By the homomorphism theory, we write that $h(v_1\theta) +\cdots+ h(v_n\theta) \overset{?}= h(v_1\theta +\cdots+ v_n \theta)$.
Then $h(v_1\theta +\cdots+ v_n \theta)\overset{?}= h(y\theta)\overset{?}= w\theta$.
Hence, $\theta$ satisfies $w\overset{?}=x_1+\cdots+x_n$.\\
\textbf{Variable Elimination.}\\
{\textbf{VE1.}} Assume that $\theta \models \Gamma\{x \mapsto y\} || \triangle || \sigma\{ x \mapsto y\} \cup \{x \mapsto y \}$. This means that
$\theta$ satisfies $\Gamma\{x \mapsto y\}$ and $\sigma \{ x \mapsto y\} \cup \{x \mapsto y \}$.
Now, we have to prove that $\theta$ satisfies $\{ x \overset{?}= y \},\Gamma,$ and $\sigma$.
But $\theta$ satisfies $x \mapsto y$ means that $x \theta \overset{?}= y\theta$.
$\Gamma$ is $\Gamma \{x \mapsto y\}$ but without replacing $x$ with $y$.
Since $y \theta \overset{?}= x\theta$, the substitution $\theta$ satisfies $y \mapsto x$.
Hence, we conclude that $\theta$ satisfies $\Gamma$ and $\sigma$.\\
\textbf{VE2.} We have that $\theta$ satisfies $\Gamma$ and $\sigma\{ x \mapsto t\} \cup \{x \mapsto t \}$.
Now, we have to prove that $\theta$ satisfies $\{ x \overset{?}= t \}$ and $\sigma$.
By the definition of $\theta \models \Gamma$, we have $x \theta \overset{?}= t\theta$ and it is enough to prove that
$\theta$ satisfies $\sigma$.
Let $w \mapsto s[x]$ be an assignment in $\sigma$. 
After applying $x \mapsto t$ on $\sigma$, the assignment $y \mapsto s$ with $s|_p=x$, where $p$ is a position, becomes $y \mapsto s[t]_{p}$.
We also know that $\theta$ satisfies $\sigma\{ x \mapsto t\}$ implies that $\theta$ also satisfies $w \mapsto s[t]_{p}$.
Then by the definition, we write that $y \theta \overset{?}= s[t\theta]_{p} \overset{?}= s[x\theta]_{p}$. This means that $\theta$
satisfies the assignment $w \mapsto s[x]$. Hence, $\theta$ satisfies $\sigma$.\\
\noindent
\textbf{Decomposition.}
%$$\frac{\{ x \overset{?}= f( s_1, \ldots,s_n), x \overset{?}= f( t_1, \ldots,t_n)\}\cup \Gamma || \triangle || \sigma}{\{x \overset{?}= f( t_1, \ldots,t_n),\, s_1 \overset{?}= t_1,\ldots,s_n \overset{?}= t_n \}
%\, \cup \,\Gamma|| \triangle || \sigma} $$
Assume that $\theta \models \{x \overset{?}= f( t_1, \ldots,t_n),\, s_1 \overset{?}= t_1,\ldots,s_n \overset{?}= t_n \} \, \cup \,\Gamma|| \triangle || \sigma.$
This means that $\theta$ satisfies $\{x \overset{?}= f( t_1, \ldots,t_n),\, s_1 \overset{?}= t_1,\ldots,s_n \overset{?}= t_n \} \, \cup \,\Gamma$.
Now we have to prove that $\theta$ satisfies $\{ x \overset{?}= f( s_1, \ldots,s_n), x \overset{?}= f( t_1, \ldots,t_n)\}\cup \Gamma $.
Given that $\theta$ satisfies $x \overset{?}= f( t_1, \ldots, t_n) $ and it is enough to show that $\theta$ also satisfies $x \overset{?}=f( s_1, \ldots,s_n)$.
%We also have that $s_1 \theta \overset{?}= t_1 \theta ,\\ldots.,s_n \theta \overset{?}= t_n \theta$.\\
We write $x \theta \overset{?}= f( t_1, \ldots,t_n)\theta \overset{?}= f( t_1\theta, t_2\theta,\ldots,t_n\theta) \overset{?}= 
f( s_1\theta, s_2\theta,\ldots,s_n\theta)$ since $s_1 \theta \overset{?}= t_1 \theta ,\,\ldots,s_n \theta \overset{?}= t_n \theta$.
So, $\theta$ satisfies $x \overset{?}= f( t_1, \ldots,t_n)$ and $x \overset{?}=f( s_1, \ldots,s_n)$.
Hence, $\theta \models \{x \overset{?}=f( s_1, \ldots,s_n),\, x \overset{?}= f( t_1, \ldots,t_n)\}$.\\
%Similarly, it is easy to see the proofs of other rules.
\end{proof}
\begin{lemma} \label{lem:soundness2}
\emph{Let $\striple$ and $\striplep$ be two set triples such that \\$\striple {\scriptstyle \implies_{\Iach}} \bigvee_{i}(\Gamma_i||\triangle_i||\sigma_i)$ via AC unification. Let $\theta$ be a substitution such that $\theta \models \Gamma_i||\triangle_i||\sigma_i$. Then $\theta \models \Gamma ||\triangle||\sigma$.}
\end{lemma}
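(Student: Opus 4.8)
The plan is to use that the AC Unification rule touches neither $\triangle$, nor $\sigma$, nor the equations of the problem that do not contain $+$. Write the premise of the rule as $\Psi\cup\Gamma||\triangle||\sigma$ (so the ``$\Gamma$'' of the lemma statement is this $\Psi\cup\Gamma$), where $\Psi$ is the set of $+$-equations handed to the AC unification algorithm and $\textit{Unify}(\Psi)=\{\theta_1,\ldots,\theta_n\}$. Fix a branch $i$ and a substitution $\theta$ with $\theta\models GetEqs(\theta_i)\cup\Gamma||\triangle||\sigma$. Since $\triangle$ is unchanged, $MaxVal(\triangle)\leq\kappa$; since $\sigma$ is unchanged, $\theta$ satisfies $\sigma$; and $\theta$ satisfies every equation of $\Gamma$ by hypothesis. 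Hence the whole lemma reduces to showing that $\theta$ satisfies every equation of $\Psi$.

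The key step is to turn the hypothesis $\theta\models GetEqs(\theta_i)$ into an identity that absorbs $\theta_i$. Writing $\theta_i=\{x_1\mapsto r_1,\ldots,x_m\mapsto r_m\}$, we have $GetEqs(\theta_i)=\{x_1\overset{?}=r_1,\ldots,x_m\overset{?}=r_m\}$, so the hypothesis gives $x_j\theta=_{ACh}r_j\theta$ for every $j$. Since $\theta_i$ acts as the identity on all variables outside $\{x_1,\ldots,x_m\}$, a routine induction on the structure of terms then yields $u\theta_i\theta=_{ACh}u\theta$ for every term $u$: the variable base cases are exactly $x_j\theta=_{ACh}r_j\theta$ or $y\theta=_{ACh}y\theta$, constants are fixed by both substitutions, and the inductive step is just the congruence property of $=_{ACh}$. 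This is the only part of the proof that needs any care, and even it is immediate once the statement is isolated.

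Finally, let $s\overset{?}=t$ be any equation of $\Psi$. By construction $\theta_i$ is an AC-unifier of $\Psi$, so $s\theta_i=_{AC}t\theta_i$, which entails $s\theta_i=_{ACh}t\theta_i$ since AC is a sub-theory of ACh. Applying $\theta$ and using $u\theta_i\theta=_{ACh}u\theta$ on both sides gives $s\theta=_{ACh}s\theta_i\theta=_{ACh}t\theta_i\theta=_{ACh}t\theta$, so $\theta$ satisfies $s\overset{?}=t$. Therefore $\theta$ satisfies every equation of $\Psi\cup\Gamma$, satisfies $\sigma$, and respects the bound on $\triangle$, i.e.\ $\theta\models\Psi\cup\Gamma||\triangle||\sigma$, which is precisely $\theta\models\striple$. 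One should also remark that the AC unifiers $\theta_i$ may introduce fresh symbols, such as the constants $c,c_1,\ldots$ appearing in the examples; these never occur in $\Psi$ and, being constants, are fixed by $\theta$, so the argument is unaffected. Together with Lemma~\ref{lem:soundness1}, this establishes that $\Iach$ is truth-preserving along arbitrary derivations.
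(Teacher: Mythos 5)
Your proof is correct and follows the same route as the paper's: reduce the claim to showing that any $\theta$ satisfying $GetEqs(\theta_i)$ also satisfies $\Psi$, since $\Gamma$, $\triangle$, and $\sigma$ are untouched by the rule. The paper's own proof states this implication in one line without justification, whereas you supply the missing argument (the identity $u\theta_i\theta =_{ACh} u\theta$ proved by induction on terms, combined with $s\theta_i =_{AC} t\theta_i$ for each $s \overset{?}= t$ in $\Psi$), so your version is a legitimate filling-in of the same proof rather than a different approach.
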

\begin{proof}
\textbf{AC Unification.}
$$\frac{\Psi \cup \Gamma || \triangle || \sigma}{GetEqs(\theta_{1}) \cup \Gamma|| \triangle || \sigma \vee\ldots \vee GetEqs(\theta_{n}) \cup \Gamma|| \triangle || \sigma }$$
%$$\frac{\Psi \cup \Gamma || \triangle || \sigma}{GetEqs(\textit{Unify } \Psi ) \cup \Gamma|| \triangle || \sigma }$$
Given that $\theta \models GetEqs(\theta_{1}) \cup \Gamma|| \triangle || \sigma \vee\ldots \vee GetEqs(\theta_{n}) \cup \Gamma|| \triangle || \sigma$.
This means that $\theta$ satisfies $GetEqs(\theta_{1}) \cup \Gamma|| \triangle || \sigma, \ldots, GetEqs(\theta_{n}) \cup \Gamma|| \triangle || \sigma$. Which implies that $\theta$ also satisfies $\Psi$. 
\end{proof}
By combining Lemma~\ref{lem:soundness1} and Lemma~\ref{lem:soundness2}, we have:
\begin{lemma}\label{lem:soundness3}
\emph{Let $\striple$ and $\striplep$ be two set triples such that \\$\striple {\scriptstyle \implies_{\Iach}} \bigvee_{i}(\Gamma_i||\triangle_i||\sigma_i)$. Let $\theta$ be a substitution such that $\theta \models \Gamma_i||\triangle_i||\sigma_i$. Then $\theta \models \Gamma ||\triangle||\sigma$.}
\end{lemma}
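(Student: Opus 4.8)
The plan is to prove this by a single case distinction on which rule of $\Iach$ realizes the step $\striple {\scriptstyle \implies_{\Iach}} \bigvee_{i}(\Gamma_i||\triangle_i||\sigma_i)$. By inspection of the inference system, every rule falls into exactly one of three groups: (i) AC unification, which genuinely branches into finitely many successor triples $\Gamma_1||\triangle_1||\sigma_1,\ldots,\Gamma_n||\triangle_n||\sigma_n$; (ii) the remaining transforming rules (Flattening, Update h-Depth Set, Splitting, Trivial, Decomposition, Variable Elimination, Orient), each of which produces a \emph{single} successor triple $\striplep$, so that the index set of the disjunction is a singleton; and (iii) the failure rules (OC, Clash, BC), which produce $\bot$. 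The key observation needed up front is simply that, under this reading, $\bigvee_i(\Gamma_i||\triangle_i||\sigma_i)$ ranges over a singleton index set in case (ii) and over the empty index set in case (iii), so that Lemma~\ref{lem:soundness1} and Lemma~\ref{lem:soundness2} together account for every possible one-step transition.

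With that in hand the three cases are immediate. In case (iii) the right-hand side is $\bot$, so there is no triple $\Gamma_i||\triangle_i||\sigma_i$; the hypothesis $\theta\models\Gamma_i||\triangle_i||\sigma_i$ is vacuous and the implication holds trivially (equivalently, $\bot$ is read as the empty disjunction, which nothing satisfies). In case (i), the hypothesis supplies some branch $i$ with $\theta\models\Gamma_i||\triangle_i||\sigma_i$, and Lemma~\ref{lem:soundness2} applies directly to conclude $\theta\models\striple$. In case (ii), the disjunction has the single disjunct $\striplep$, so the hypothesis is exactly $\theta\models\striplep$ and the transition $\striple\Rightarrow_{\Iach}\striplep$ is by a rule other than AC unification; Lemma~\ref{lem:soundness1} then yields $\theta\models\striple$. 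Combining the three cases proves the lemma.

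I do not expect any real obstacle here: the lemma is purely a packaging step that merges the two preceding soundness lemmas into one uniform statement about $\implies_{\Iach}$. The only point requiring care is the bookkeeping remark that the notation $\bigvee_i$ subsumes both the non-branching rules (singleton index set) and the failure rules (empty index set), since without that remark it is not syntactically obvious that Lemmas~\ref{lem:soundness1} and~\ref{lem:soundness2} exhaust all cases. Once this is stated explicitly, the proof is a three-line case analysis.
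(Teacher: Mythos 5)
Your proposal is correct and matches the paper exactly: the paper derives this lemma simply ``by combining Lemma~\ref{lem:soundness1} and Lemma~\ref{lem:soundness2},'' which is precisely your case split between AC unification, the single-successor rules, and the failure rules. Your explicit remark that the failure rules yield an empty disjunction (making the hypothesis vacuous) is a useful piece of bookkeeping the paper leaves implicit, but the argument is the same.
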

Then by induction on Lemma~\ref{lem:soundness3}, we get the following theorem:
\begin{theorem}\label{thm:soundness}
\emph{Let $\striple$ and $\striplep$ be two set triples such that \\$\striple \overset{*} {\scriptstyle \implies_{\Iach}} \bigvee_{i}(\Gamma_i||\triangle_i||\sigma_i$). Let $\theta$ be a substitution such that $\theta \models \Gamma_i||\triangle_i||\sigma_i$. Then $\theta \models \Gamma ||\triangle||\sigma$.}
\end{theorem}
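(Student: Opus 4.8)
The plan is to prove Theorem~\ref{thm:soundness} by induction on the length of the derivation $\striple \overset{*}{\Rightarrow_{\Iach}} \bigvee_i(\Gamma_i||\triangle_i||\sigma_i)$, using Lemma~\ref{lem:soundness3} as the engine for a single step. In the base case the derivation has length zero, so the right-hand disjunction is just the single triple $\striple$, and the implication $\theta \models \striple \;\Rightarrow\; \theta \models \striple$ is immediate. For the inductive step I would peel off the first inference step, writing $\striple \Rightarrow_{\Iach} \bigvee_{j\in J}(\Gamma_j||\triangle_j||\sigma_j)$ followed by strictly shorter derivations out of each $\Gamma_j||\triangle_j||\sigma_j$; the set of final disjuncts then partitions as $I = \biguplus_{j\in J} I_j$, where each $\Gamma_j||\triangle_j||\sigma_j \overset{*}{\Rightarrow_{\Iach}} \bigvee_{i\in I_j}(\Gamma_i||\triangle_i||\sigma_i)$ is a sub-derivation of smaller length. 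Given the hypothesis $\theta \models \Gamma_i||\triangle_i||\sigma_i$ for the fixed index $i$, I locate the unique $j$ with $i \in I_j$, apply the induction hypothesis to the sub-derivation out of $\Gamma_j||\triangle_j||\sigma_j$ to get $\theta \models \Gamma_j||\triangle_j||\sigma_j$, and then invoke Lemma~\ref{lem:soundness3} on the single first step to conclude $\theta \models \striple$.

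A cleaner organization, which I would probably adopt in the actual write-up, is to run the induction over the derivation \emph{tree} whose only branching nodes are AC~Unification steps: one fixes the leaf $\Gamma_i||\triangle_i||\sigma_i$ of interest, follows the unique path from that leaf back up to the root $\striple$, and propagates the property $\theta \models (\cdot)$ upward one edge at a time, applying Lemma~\ref{lem:soundness1} at edges arising from non-AC rules and Lemma~\ref{lem:soundness2} at AC~Unification edges (both of which are packaged together as Lemma~\ref{lem:soundness3}). Under either presentation the mathematical content lives entirely in the already-established lemmas, and the theorem is simply their transitive closure along the derivation.

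The only real care required is the bookkeeping of the disjunctive/branching structure: one must be sure that each leaf of the final disjunction has a well-defined ancestry back to $\striple$, and that the induction is arranged so that the hypothesis certifies precisely the disjunct that the next application of Lemma~\ref{lem:soundness3} consumes as input. I expect this to be the main (and rather minor) obstacle. I would also remark in passing---although it is already subsumed by the lemmas, since the relation $\theta \models (\cdot)$ bakes in $MaxVal(\triangle) \le \kappa$---that this side condition is preserved when moving upward along a derivation: the h-depth set only grows and the recorded h-depths of existing variables only increase as inference proceeds, so $MaxVal(\triangle) \le MaxVal(\triangle_i) \le \kappa$.
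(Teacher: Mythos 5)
Your proposal is correct and matches the paper's argument: the paper proves this theorem simply by stating it follows ``by induction on Lemma~\ref{lem:soundness3},'' and your write-up is exactly that induction on the length of the derivation (with the branching bookkeeping spelled out, which the paper omits entirely). Nothing further is needed.
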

We have the following corollary from Theorem~\ref{thm:soundness}:
\begin{theorem}[\textbf{Soundness}]
\emph{Let $\sigma$ be a set of equations. Suppose that we get $\bigvee_{i}(\Gamma_i||\triangle_i||\sigma_i)$ after exhaustively applying the rules from $\Iach$ to $\striple$, i.e, 
$\striple \overset{*} {\scriptstyle \implies_{\Iach}} \bigvee_{i}(\Gamma_i||\triangle_i||\sigma_i$), where for each $i$, no rules applicable to $\Gamma_i||\triangle_i||\sigma_i$. Let $\Sigma = \{\sigma_{i} \mid \Gamma_{i} = \emptyset\}$. Then any member of $\Sigma$ is an $ACh$-unifier of $\Gamma$.}
\end{theorem}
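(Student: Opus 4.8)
The plan is to obtain the statement as an immediate corollary of Theorem~\ref{thm:soundness}. That theorem tells us that whenever $\striple$ rewrites, in zero or more $\Iach$-steps, to a disjunction $\bigvee_i(\Gamma_i||\triangle_i||\sigma_i)$ and some substitution $\theta$ satisfies one of the terminal triples $\Gamma_i||\triangle_i||\sigma_i$, then $\theta \models \striple$; in particular $\theta$ solves every equation of $\Gamma$ modulo $ACh$. Hence it suffices, for each successful branch $i$ (one with $\Gamma_i = \emptyset$), to exhibit a substitution satisfying $\Gamma_i||\triangle_i||\sigma_i$ and to arrange that this substitution is $\sigma_i$ itself: applying Theorem~\ref{thm:soundness} with $\theta := \sigma_i$ then yields $\sigma_i \models \striple$, which is precisely the claim that $\sigma_i$ is an $ACh$-unifier of $\Gamma$.

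First I would fix $\sigma_i \in \Sigma$ and unfold the definition of ``$\theta$ satisfies a set triple''. Because $\Gamma_i = \emptyset$, the requirement ``$\sigma_i$ satisfies every equation of $\Gamma_i$'' is vacuous, so only two conditions remain: $MaxVal(\triangle_i) \le \kappa$ and $\sigma_i \models \sigma_i$. For the first, recall that the Bound Check rule rewrites any triple with $MaxVal(\triangle) > \kappa$ to $\bot$, with no condition on the equation set; since by hypothesis no rule of $\Iach$ applies to $\Gamma_i||\triangle_i||\sigma_i$, in particular Bound Check does not apply, and therefore $MaxVal(\triangle_i) \le \kappa$.

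For the second condition, $\sigma_i \models \sigma_i$, I would show that $\sigma_i$ is idempotent, so that $t\sigma_i = t$ for every binding $x \mapsto t$ of $\sigma_i$ and hence $x\sigma_i = t = t\sigma_i$. Idempotence follows from an invariant maintained along any derivation started from the algorithm's initial configuration (empty $h$-depth set, identity substitution): at every stage $Dom(\sigma)$ is disjoint both from $Var(\Gamma)$ and from the set of variables occurring in $Range(\sigma)$. Indeed, the only rules that modify $\sigma$ are VE1 and VE2, and each of them first propagates the new binding into $\Gamma$ and into $\sigma$ --- thereby deleting the bound variable $x$ from $\Gamma$ and from $Range(\sigma)$ --- before adjoining the binding itself, while the side conditions ($x,y$ distinct variables for VE1; $x$ not occurring in $t$ for VE2) keep $x$ out of its own image; the Flatten, Splitting and AC~Unification rules only introduce fresh variables from $\mathcal{NV}$, which are never already in $Dom(\sigma)$. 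With both conditions established, $\sigma_i \models \Gamma_i||\triangle_i||\sigma_i$, and Theorem~\ref{thm:soundness} with $\theta := \sigma_i$ gives $\sigma_i \models \striple$; reading off the $\Gamma$-component, $s\sigma_i =_{ACh} t\sigma_i$ for every equation $s \overset{?}= t$ of $\Gamma$, so $\sigma_i$ is an $ACh$-unifier of $\Gamma$.

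I expect the idempotence step to be the only one needing real care: one must check that \emph{every} rule (not merely the Variable Elimination rules) preserves the disjointness invariant, in particular that a freshly generated variable never coincides with a variable already bound by $\sigma$ --- this is ensured by the ``fresh variable'' discipline together with the $\mathcal{OV}/\mathcal{NV}$ partition fixed in Section~\ref{sec:inf:system}. The remainder is routine unfolding of definitions and a one-line appeal to the soundness theorem already proved.
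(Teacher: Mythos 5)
Your proposal is correct and follows the same route as the paper, which presents this theorem as an immediate corollary of Theorem~\ref{thm:soundness} without spelling out the details; you supply exactly the missing step, namely instantiating $\theta := \sigma_i$ and checking that $\sigma_i$ satisfies its own terminal triple (vacuously on the empty $\Gamma_i$, via the inapplicability of Bound Check for $MaxVal(\triangle_i)\leq\kappa$, and via idempotence of $\sigma_i$ for $\sigma_i \models \sigma_i$). The idempotence invariant you identify is indeed the only point needing care, and your argument for it (eager propagation of new bindings in VE1/VE2 plus freshness of all newly introduced variables) is sound.
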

\subsection{Completeness}
Before going to prove the completeness of our inference system, we present a definition below:
\begin{definition}[Directed conservative extension]
\emph{Let $\bigvee_{i}(\Gamma_i||\triangle_i||\sigma_i)$ and $\bigvee_{i}(\Gamma'_i||\triangle'_i||\sigma'_i)$ be two set triples. $\bigvee_{i}(\Gamma'_i||\triangle'_i||\sigma'_i)$ is called a directed conservative extension of $\bigvee_{i}(\Gamma_i||\triangle_i||\sigma_i)$, if for any substitution $\theta$, such that $\theta \models \Gamma_i||\triangle_i||\sigma_i$, then there exists $k$ and $\sigma$, whose domain is the variables in $Var(\Gamma'_{k})\setminus Var(\Gamma_{k})$, such that $\theta\sigma \models \Gamma'_i||\triangle'_i||\sigma'_i$. If $\bigvee_{i}(\Gamma_i||\triangle_i||\sigma_i)$ (resp. $\bigvee_{i}(\Gamma'_i||\triangle'_i||\sigma'_i)$) only contains one set triple $\striple$ (resp. $\striplep$), we say $\bigvee_{i}(\Gamma'_i||\triangle'_i||\sigma'_i)$ (resp. $\striplep$) is a directed conservative extension of $\striple$.}
\end{definition}
Next, we show that our inference procedure never loses any solution.
\begin{lemma}\label{lem:comACU}
\emph{Let $\striple$ be a set triple. If there exists a set triple $\striplep$ such that $\striple \Rightarrow_{\Iach} \striplep$ via all the rules of $\Iach$ except AC unification, then $\striplep$ is a directed conservative extension of $\striple$. }
\end{lemma}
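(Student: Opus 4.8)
The statement asserts that every inference rule except AC unification transforms a set triple into a \emph{directed conservative extension} of itself. Since this is a finite case analysis over the rules, the plan is to handle the rules in three groups according to how they modify the triple, and in each case exhibit (or argue the existence of) the extending substitution $\sigma$ on the new variables $Var(\Gamma') \setminus Var(\Gamma)$. Throughout, I would fix an arbitrary $\theta$ with $\theta \models \striple$ and show how to build $\theta' = \theta\sigma$ with $\theta' \models \striplep$; note that $MaxVal(\triangle') \le \kappa$ must be checked, but since these rules never increase $MaxVal$ beyond $\kappa$ without the Bound Check rule firing (and BC produces $\bot$, which trivially satisfies the statement vacuously), this part is routine. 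The bulk of the work is the solution-extension part; soundness (Lemma~\ref{lem:soundness1}) already gives the reverse direction implicitly and can be cited for intuition on the algebra.

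\textbf{Group 1: rules not introducing new variables and not changing $\Gamma$ up to logical equivalence.} This covers \textbf{Trivial}, \textbf{Orient}, \textbf{VE1}, \textbf{VE2}, \textbf{Decomp}, and the h-depth update rules \textbf{U$h$}, \textbf{UL}, \textbf{UR}. For all of these, $Var(\Gamma') \setminus Var(\Gamma) = \emptyset$, so the extending substitution is the identity and it suffices to show $\theta \models \striplep$ directly. For Trivial, Orient, Decomp this is immediate from the algebra (removing a trivial equation, swapping sides, or decomposing under a free symbol preserves all solutions). For VE1 and VE2 the argument is the standard one: if $\theta$ solves $x \overset{?}{=} t$ together with $\Gamma$, then $\theta$ solves $\Gamma\{x \mapsto t\}$ and the extended substitution part, because $x\theta =_{ACh} t\theta$ lets us freely substitute. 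For the update rules, $\Gamma$ is unchanged and only $\triangle$ is modified; I must check that the new h-depth value assigned to a variable is still $\le \kappa$ — but again, if it exceeded $\kappa$ then Bound Check would apply and the premise of having a non-$\bot$ $\striplep$ fails. (Strictly, one observes that the h-depth set only records a sound lower bound on the h-height forced on any solution, so $\theta \models \triangle'$ holds whenever $\theta$ is an actual $\kappa$-bounded solution.)

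\textbf{Group 2: the flattening rules} \textbf{FBS}, \textbf{FL}, \textbf{FR}, \textbf{FU}. Here a single fresh variable $v$ is introduced, so $Var(\Gamma') \setminus Var(\Gamma) = \{v\}$. Given $\theta \models \striple$, I define $\sigma = \{v \mapsto t_1\theta\}$ (for FBS and FU; analogously $v \mapsto t_1\theta$ for FL, $v \mapsto t_2\theta$ for FR). Then $\theta\sigma$ agrees with $\theta$ on all old variables and sends $v$ to the value of the subterm that was factored out; a direct computation shows $\theta\sigma$ satisfies both new equations (e.g. for FBS, $v(\theta\sigma) = t_1\theta = t_1(\theta\sigma)$ and, since $t_1\theta =_{ACh} t_2\theta$, also $v(\theta\sigma) =_{ACh} t_2(\theta\sigma)$), and it still satisfies the untouched part $\Gamma$. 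The new pair $(v,0)$ added to $\triangle$ is harmless since $0 \le \kappa$.

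\textbf{Group 3: the Splitting rule} — this is the main obstacle. Here $n$ fresh variables $v_1, \dots, v_n$ are introduced, and the extending substitution must "split" the value $y\theta$ across them in a way consistent with $x_i\theta = h(v_i)\theta$. The difficulty is that from $\theta \models \{x \overset{?}{=} h(y),\ x \overset{?}{=} x_1 + \cdots + x_n\}$ we know $h(y\theta) =_{ACh} x_1\theta + \cdots + x_n\theta$, and we must produce terms $s_1, \dots, s_n$ with $x_i\theta =_{ACh} h(s_i)$ and $y\theta =_{ACh} s_1 + \cdots + s_n$. The key lemma needed is a "peeling" property of the ACh theory: whenever $h(u) =_{ACh} w_1 + \cdots + w_n$, each $w_i$ is itself ACh-equal to a term of the form $h(u_i)$ and $u =_{ACh} u_1 + \cdots + u_n$. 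I would prove this by appealing to the convergent rewrite system $R_2 = \{h(x_1) + h(x_2) \to h(x_1 + x_2)\}$ (or to a structural normal-form argument modulo AC): the $R_2$-normal form of $h(u)$ modulo AC is again headed by $h$, so any AC-equal sum must, after reassociation, have each summand reducible to an $h$-headed term. Setting $\sigma = \{v_1 \mapsto u_1, \dots, v_n \mapsto u_n\}$ then yields $\theta\sigma \models \striplep$. I expect the careful statement and proof of this peeling lemma — handling the interaction of $h$-distribution with AC-permutation of an arbitrary-length sum — to be where the real content lies; everything else is bookkeeping. Finally, combining all cases gives the lemma, and I would remark that the AC-unification case is deferred to a companion lemma because it branches and its conservative-extension property follows from completeness of the underlying AC-unification algorithm.
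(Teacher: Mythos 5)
Your plan handles the solution-preserving rules well --- indeed your treatment of Flattening and of Splitting (where you correctly isolate the ``peeling'' property of ACh, that $h(u) =_{ACh} w_1 + \cdots + w_n$ forces each $w_i =_{ACh} h(u_i)$ with $u =_{ACh} u_1 + \cdots + u_n$, as the real content) is more explicit than the paper's own argument, which only gestures at this via the rewrite system $R_1$. But there is a genuine gap: you omit the failure rules \textbf{Occur Check} and \textbf{Clash} entirely. These are rules of $\Iach$ other than AC unification, so the lemma covers them, and for them ``directed conservative extension by $\bot$'' means precisely that $\striple$ has \emph{no} solution whenever the rule fires --- which is exactly where completeness could fail and therefore cannot be skipped. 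The paper argues this for Occur Check by exhibiting invariants of the ACh axioms (the number of $+$ symbols and the h-depth of a variable are unchanged by $h(x_1+\cdots+x_n) \thickapprox h(x_1)+\cdots+h(x_n)$ and by AC), so that no term can be ACh-equal to a term properly containing it under a function symbol; and for Clash by noting that no ACh axiom rewrites under an uninterpreted top symbol, so terms headed by distinct free symbols cannot be ACh-equal. Your parenthetical dismissal of Bound Check is salvageable (since $\theta \models \striple$ requires $MaxVal(\triangle) \leq \kappa$ by definition, the hypothesis is unsatisfiable when BC fires), but the reason you give --- that ``$\bot$ trivially satisfies the statement'' --- is backwards: nothing satisfies $\bot$, so the vacuity must come from the hypothesis side, not the conclusion side.

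A second, smaller issue is your justification of Decomposition as ``decomposing under a free symbol.'' The rule's side condition is only $f \neq +$, so it also applies when $f = h$, and there the claim is that $h$ is injective modulo ACh: $h(s) =_{ACh} h(t)$ implies $s =_{ACh} t$. This is true but not free, since $h$ interacts with $+$ through the homomorphism axiom; the paper devotes a separate paragraph to it, arguing via the convergent system $R_2$ that a term of the form $h(t)$ is irreducible, so an equation $h(t_1) \overset{?}{=} h(t_2)$ can only be solved by unifying the arguments. You should add that case to your Group 1 analysis.
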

\begin{proof}
\textbf{Trivial.} It is trivially true.\\
 \noindent
\textbf{Occur Check.} In the homomorphism theory, no term can be equal to a subterm of itself.
This is because the number of $+$ symbols and h-depth of each variable stay the same with the application of the homomorphism equation $h(x_1+\cdots+x_n)\overset{?}= h(x_1)+\cdots+h(x_n)$.
So, the given problem has no solution in the homomorphism theory.\\
%\red{ Bound check should be more explicit.} \\
\textbf{Bound Check.}
% \begin{align*}
% \frac{ \Gamma || \triangle || \sigma}{\textbf{Fail}} 
% \text{\hspace{35pt} If $MaxVal(\triangle) > \kappa$}
% \end{align*}
 We see that there exists a variable $y$ with the h-depth $\kappa+1$ in the graph, that is, there is a variable $x$ above $y$ with $ \kappa+1$ h-symbols below it.
 Let $\theta$ be a solution of the unification problem $\Gamma$. Then the term $x\theta$ has the h-height $\kappa +1$, but the term $x\theta$ is also a subterm of some $s_i\theta$ or $t_i \theta$ in the original unification problem.
Hence, the unification problem $\Gamma$ has no solution within the given bound $\kappa$.\\
\ignore{Variable Elimination. It is trivially true.\\}
\textbf{Clash. }We don't have a rewrite rule that deals with
the uninterpreted function symbols, i.e., the function symbols which are not in $\{ h,\,+\}$. So the given problem has to have no solution.\\
\textbf{Splitting.} We have to make sure that we never lose any solution with this rule.
Here we consider the rewrite system $R_1$ which has the rewrite rule $h(x_1+\cdots+x_n) \rightarrow h(x_1)+\cdots+h(x_n)$.
 In order to apply this rule the term under the $h$ should be the sum of $n$ variables. 
The problem $\{h(y) \overset{?}= x_1 +\cdots+x_n\}$ is replaced by the set $\{h(v_1+\cdots+v_n) \overset{?}= x_1 +\cdots+x_n\}$ with
the substitution $\{y\mapsto v_1 + \cdots+v_n \}$. 
Then we have the equation with the reduced term in $R_1$ is the equation $h(v_1)+\cdots+h(v_n) \overset{?}= x_1 +\cdots+x_n$, and the substitution $\{y\mapsto v_1 +\cdots+ v_n, x_1 \mapsto h(v_1),\ldots, x_n \mapsto h(v_n)\}$. 
Hence, we never lose any solution here. \\
\textbf{Decomposition.}
If $f$ is the top symbol on both sides of an equation then there is no other rule to solve it except the Decomposition rule, where $f\neq h$ and $f\neq +$. So, we never lose any solution.\\
\par To cover the case where the top symbol is $h$ for the terms on both sides of an equation, we consider the rewrite system $R_2$ which has the rewrite rule $h(x_1)+h(x_2) \rightarrow h(x_1+x_2)$. In the homomorphism theory with the rewrite system $R_2$, we cannot reduce the term $h(t)$. 
So, we solve the equation of the form $h(t_1) \overset{?}=h(t_2)$ only with the Decomposition rule. Hence, we never lose any solution here too.
\end{proof}
%\red{
% Lemma 4.3(Proof, AC Unification) "if + is the Top symbol on both sides of an equation". Seems %impossible since w+x=z+y is flattened to v=w+x, v=z+y. }\\
\begin{lemma}\label{lem:ACU}
\emph{Let $\striple$ be a set triple. If there exists a set triple $\striplep$ such that $\striple {\scriptstyle \implies_{\Iach}} \bigvee_{i}(\Gamma_i||\triangle_i||\sigma_i)$ via AC unification, then $\bigvee_{i}(\Gamma_i||\triangle_i||\sigma_i)$ is a directed conservative extension of $\striple$. }
\end{lemma}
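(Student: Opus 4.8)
The plan is to show that when AC unification is invoked on $\Psi \cup \Gamma$, the resulting disjunction $\bigvee_i (\Gamma_i || \triangle_i || \sigma_i)$ is a directed conservative extension of $\striple$ in the sense of the definition above. Concretely, I must take an arbitrary $\theta$ with $\theta \models \striple$ — so $\theta$ solves every equation of $\Psi \cup \Gamma$, satisfies $\sigma$, and $MaxVal(\triangle) \le \kappa$ — and produce an index $k$ and a substitution $\rho$ whose domain is $Var(\Gamma_k) \setminus Var(\Gamma)$ such that $\theta\rho \models \Gamma_k || \triangle_k || \sigma_k$. The central fact I would invoke is the completeness of the underlying AC unification algorithm: $Unify(\Psi) = \{\theta_1,\dots,\theta_n\}$ is a complete set of AC unifiers of $\Psi$, so since $\theta$ restricted to $Var(\Psi)$ is an AC unifier of $\Psi$, there is some $k$ and some $\eta$ with $x\theta =_{ACh} x\theta_k\eta$ for all $x \in Var(\Psi)$ (using that AC unification is complete up to $=_{AC}$, hence up to $=_{ACh}$).

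The key steps, in order, would be: (i) isolate $\Psi$, the $+$-equations, from $\Gamma$, the non-$+$ part, noting that AC unification touches only $\Psi$ and leaves $\Gamma$, $\triangle$, $\sigma$ untouched, so $\triangle_k = \triangle$ and $\sigma_k = \sigma$ and thus $MaxVal(\triangle_k) \le \kappa$ and $\theta \models \sigma$ are immediate; (ii) apply completeness of AC unification to get the branch index $k$ and matching substitution $\eta$; (iii) define $\rho$ to be $\eta$ restricted to the fresh variables $Var(GetEqs(\theta_k)) \setminus Var(\Psi)$ — these are exactly the new variables introduced in branch $k$ — and argue that $\theta\rho$ agrees with $\theta$ on all old variables so that it still satisfies $\Gamma$ and $\sigma$; (iv) verify $\theta\rho \models GetEqs(\theta_k)$: for each assignment $x \mapsto t$ in $\theta_k$ we have $GetEqs$ contributing $x \overset{?}{=} t$, and $x\theta\rho = x\theta =_{ACh} x\theta_k\eta = t\eta = t\theta\rho$ by the choice of $\eta$ and the fact that $t$'s variables are either fresh (handled by $\rho$) or in $Var(\Psi)$ (where $\theta\rho = \theta$). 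Combining (iii) and (iv) gives $\theta\rho \models \Gamma_k || \triangle_k || \sigma_k$, which is exactly the directed conservative extension property.

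I expect the main obstacle to be the bookkeeping around which variables are ``old'' versus ``fresh'' in step (iii)–(iv), and in particular making precise that the fresh variables introduced by AC unification (the constants $c_i$ in the examples are actually treated as fresh variables, or genuinely new variables depending on the AC algorithm used) do not clash with $Var(\Gamma)$ or $Dom(\sigma)$, so that extending $\theta$ by $\rho$ on those variables cannot disturb $\theta \models \Gamma$ or $\theta \models \sigma$. A secondary subtlety is that completeness of AC unification is stated modulo $=_{AC}$, and I need the conclusion modulo $=_{ACh}$; since $=_{AC} \subseteq\, =_{ACh}$ this is only a weakening, so it causes no real trouble, but it should be remarked explicitly. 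Once these points are settled, the argument is essentially a direct unfolding of the definitions together with the cited completeness result for AC unification.
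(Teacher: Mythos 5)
Your overall route is the same as the paper's: the paper's entire proof of this lemma is a one-sentence appeal to the completeness of the built-in AC unification algorithm, and your proposal is a faithful (and much more detailed) unfolding of that appeal into the definition of directed conservative extension. Steps (i), (iii) and (iv) — isolating $\Psi$, restricting the matching substitution to the fresh variables of branch $k$, and checking $GetEqs(\theta_k)$ equation by equation — are exactly the right bookkeeping, and your worry about freshness of the new variables is legitimate and correctly resolved.

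There is, however, one step that fails as written, and it is precisely the step your ``secondary subtlety'' remark misdiagnoses. You assert that ``$\theta$ restricted to $Var(\Psi)$ is an AC unifier of $\Psi$.'' It is not: $\theta$ is only assumed to be an $ACh$-unifier, and an $ACh$-unifier of a $+$-equation need not unify it modulo AC alone (take $x \overset{?}= y_1 + y_2$ with $x\theta = h(a+b)$, $y_1\theta = h(a)$, $y_2\theta = h(b)$: the two sides are equal modulo $ACh$ but not modulo AC). Your remark that ``$=_{AC} \subseteq\, =_{ACh}$ so this is only a weakening'' addresses the harmless direction (an AC-solution is an $ACh$-solution) rather than the problematic one (the completeness guarantee of $\textit{Unify}$ only quantifies over AC-unifiers, so it does not directly cover $\theta$). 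The gap is repairable: replace $\theta$ by its normal form $\theta{\downarrow}$ with respect to the convergent system $R_1$ modulo AC; since the terms in $\Psi$ are flattened sums of variables, instantiating them by a normalized substitution creates no new $R_1$-redexes, so confluence gives that $\theta{\downarrow}$ restricted to $Var(\Psi)$ \emph{is} an AC-unifier of $\Psi$, and $\theta =_{ACh} \theta{\downarrow}$ pointwise, so the branch $k$ and matching substitution $\eta$ obtained for $\theta{\downarrow}$ also work for $\theta$ up to $=_{ACh}$. (To be fair, the paper's own one-line proof does not address this point either; the paper implicitly relies on AC unification being applied only after Splitting has been exhausted.)
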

\begin{proof}
%\textbf{AC Unification. }
Since the buit-in AC unification algorithm is complete, we never lose any solutions on the application of this rule.
\ignore{Assume that there are equations in $\Psi$ that contains both $h$ and $+$ symbols. Then the built-in unification algorithm on $\Psi$ may loose solutions since $h$ is considered as an uninterpreted symbol. However, the missing solutions are regained on the other branch of the unification problem.}
%If the set $\Psi$ is non-empty, then there is no rule to solve it except AC Unification rule. So, %we never lose any solution here too.
%Similar to the case of Decomposition, if $+$ is the top symbol on both sides of an equation, then there is no rule to solve it except AC Unification rule. 
\end{proof}
By combining Lemma~\ref{lem:comACU} and Lemma~\ref{lem:ACU}, we have:
\begin{lemma}\label{lem:1step}
\emph{Let $\striple$ be a set triple. If there exists a set of set triples $\bigvee_{i}(\Gamma_i||\triangle_i||\sigma_i)$ such that $\striple {\scriptstyle \implies_{\Iach}} \bigvee_{i}(\Gamma_i||\triangle_i||\sigma_i)$, then $\bigvee_{i}(\Gamma_i||\triangle_i||\sigma_i)$ is a directed conservative extension of $\striple$. }
\end{lemma}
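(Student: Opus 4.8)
The plan is to proceed by a simple case analysis on the inference rule applied in the single step $\striple {\scriptstyle \implies_{\Iach}} \bigvee_{i}(\Gamma_i||\triangle_i||\sigma_i)$. The key observation is that every rule of $\Iach$ falls into exactly one of two mutually exclusive and exhaustive classes: it is the AC Unification rule, or it is one of the remaining rules (Flattening, Update $h$-Depth Set, Splitting, Trivial, Variable Elimination, Decomposition, Occur Check, Clash, Bound Check, Orient). Each class is already handled by one of the two preceding lemmas, so the argument is purely a matter of assembling them.

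In the first case, when the step is an application of AC Unification, Lemma~\ref{lem:ACU} states precisely that $\bigvee_{i}(\Gamma_i||\triangle_i||\sigma_i)$ is a directed conservative extension of $\striple$, and there is nothing more to do. In the second case, the step produces a single triple $\striplep$ rather than a genuine disjunction; by Lemma~\ref{lem:comACU}, this $\striplep$ is a directed conservative extension of $\striple$. Since the definition of directed conservative extension explicitly subsumes the degenerate case in which the disjunction consists of a single disjunct, identifying the one-element disjunction $\bigvee_{i}(\Gamma_i||\triangle_i||\sigma_i)$ with $\striplep$ closes this case.

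Combining the two cases shows that every one-step deduction $\striple {\scriptstyle \implies_{\Iach}} \bigvee_{i}(\Gamma_i||\triangle_i||\sigma_i)$ yields a directed conservative extension of $\striple$, which is exactly the claim. I do not expect any real obstacle here: the substantive work was carried out in Lemmas~\ref{lem:comACU} and~\ref{lem:ACU}, so the only points to verify are that the case split is genuinely exhaustive (no inference rule of $\Iach$ has been overlooked) and that the single-triple outputs of the non-AC rules are legitimately regarded as one-disjunct disjunctions; both are immediate from the definitions and the structure of the inference system.
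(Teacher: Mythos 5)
Your proposal is correct and matches the paper's argument exactly: the paper obtains this lemma simply "by combining" Lemma~\ref{lem:comACU} and Lemma~\ref{lem:ACU}, which is precisely the two-case split you spell out. You merely make explicit the routine points (exhaustiveness of the case split and viewing a single triple as a one-disjunct disjunction) that the paper leaves implicit.
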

By induction on Lemma~\ref{lem:1step}, we get:
\begin{theorem}\label{lem:oneormore}
\emph{Let $\striple$ be a set triple. If there exists a set of set triples $\bigvee_{i}(\Gamma_i||\triangle_i||\sigma_i)$ such that $\striple \overset{+} {\scriptstyle \implies_{\Iach}} \bigvee_{i}(\Gamma_i||\triangle_i||\sigma_i)$, then $\bigvee_{i}(\Gamma_i||\triangle_i||\sigma_i)$ is a directed conservative extension of $\striple$. }
\end{theorem}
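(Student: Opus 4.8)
Theorem \ref{lem:oneormore} asserts that the "directed conservative extension" relation, which Lemma \ref{lem:1step} establishes for a single inference step, lifts to any positive number of steps $\striple \overset{+}{\implies_{\Iach}} \bigvee_i(\Gamma_i||\triangle_i||\sigma_i)$. The plan is a straightforward induction on the number of inference steps, with Lemma \ref{lem:1step} supplying the base case and the induction step resting on the fact—explicitly noted right after the definition of conservative $E$-extension in the Preliminary section—that the conservative-extension property is transitive. The only mild subtlety is bookkeeping: a single step already produces a disjunction of triples (because of AC Unification), so at every stage we are really relating one disjunction to another, and the induction must be phrased at that level of generality.

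Concretely, I would argue as follows. Let $n \geq 1$ be the number of steps in $\striple \overset{+}{\implies_{\Iach}} \bigvee_i(\Gamma_i||\triangle_i||\sigma_i)$. For $n=1$ the claim is exactly Lemma \ref{lem:1step}. For the inductive step, suppose the result holds for derivations of length $n$, and consider a derivation of length $n+1$. It factors as $\striple \overset{+}{\implies_{\Iach}} \bigvee_j(\Gamma'_j||\triangle'_j||\sigma'_j) \implies_{\Iach} \bigvee_i(\Gamma_i||\triangle_i||\sigma_i)$, where the last arrow is a single application of some rule from $\Iach$ to one of the disjuncts, say $\Gamma'_{j_0}||\triangle'_{j_0}||\sigma'_{j_0}$, replacing it by its own one-step expansion (the other disjuncts being carried along unchanged). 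By the induction hypothesis, $\bigvee_j(\Gamma'_j||\triangle'_j||\sigma'_j)$ is a directed conservative extension of $\striple$; by Lemma \ref{lem:1step} applied to the affected disjunct, the post-step disjunction is a directed conservative extension of $\bigvee_j(\Gamma'_j||\triangle'_j||\sigma'_j)$ (the unchanged disjuncts trivially satisfy the extension condition with the identity witness). Composing the two witnessing substitutions—given $\theta \models \Gamma_i||\triangle_i||\sigma_i$ for some $i$, first obtain a disjunct index and a substitution pulling $\theta$ back to a solution of $\bigvee_j(\Gamma'_j||\triangle'_j||\sigma'_j)$, then pull that back further to a solution of $\striple$—yields the desired witness, and the domains of the two substitutions are disjoint by freshness of the newly introduced variables, so their composition is well-defined. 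Hence $\bigvee_i(\Gamma_i||\triangle_i||\sigma_i)$ is a directed conservative extension of $\striple$, completing the induction.

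The step I expect to require the most care is not conceptually deep but is the one most easily fumbled: verifying that directed conservative extension genuinely composes, i.e. that it is transitive in the disjunctive setting. One must check that when $\theta$ solves a disjunct $\Gamma_i||\triangle_i||\sigma_i$, the index $j_0$ furnished in the second hop is compatible with the index supplied in the first hop, and that the variable sets $Var(\Gamma_i)\setminus Var(\Gamma'_{j_0})$ and $Var(\Gamma'_{j_0})\setminus Var(\Gamma_{k})$ are disjoint so that the composite substitution acts correctly on each. Since every rule of $\Iach$ introduces only fresh variables (never reused), this disjointness holds, and the Maximum-h-depth side condition $MaxVal(\triangle)\le\kappa$ is preserved because $\triangle$ only grows in a controlled way and the Bound Check rule has already been accounted for in Lemma \ref{lem:comACU}. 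With these observations the induction goes through cleanly, and Theorem \ref{lem:oneormore} follows.
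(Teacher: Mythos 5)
Your proposal matches the paper's argument, which simply states that the theorem follows by induction on Lemma~\ref{lem:1step}; your induction on the number of steps with Lemma~\ref{lem:1step} as the base case and transitivity of directed conservative extension for the inductive step is exactly that argument, spelled out in more detail (including the disjunction bookkeeping and freshness of new variables, which the paper leaves implicit).
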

We get the following corollary from the above theorem:
\begin{theorem}[\textbf{Completeness}] \label{thm:completeness}
\emph{Let $\Gamma$ be a set of equations. Suppose that we get $\bigvee_{i}(\Gamma_{i}||\triangle_{i}|| \sigma_{i})$ after applying the rules from $\Iach$ to $\striple$ exhaustively, that is, \\
$\striple \overset{*} {\scriptstyle \implies_{\Iach}} \bigvee_{i}(\Gamma_{i}||\triangle_{i}|| \sigma_{i})$, where for each $i$, none of the rules applicable on $\Gamma_{i}||\triangle_{i}|| \sigma_{i}$. Let $\Sigma = \{\sigma_{i} \mid \Gamma_{i} = \emptyset\}$. Then for any $ACh$-unifier $\theta$ of $\Gamma$, there exists a $\sigma \in \Sigma$, such that $\sigma \lesssim^{Var(\Gamma)} _{ACh}\theta$.}
\end{theorem}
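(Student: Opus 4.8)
\begin{proofsketch}
The plan is to obtain the theorem by combining the Termination theorem with the directed conservative extension property (Theorem~\ref{lem:oneormore}), then to check that a terminal branch still carrying a solution must have an empty equation component, and finally to unwind the definition of $\lesssim^{Var(\Gamma)}_{ACh}$; almost all of the substantive work has already been isolated in the preceding lemmas.

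First I would fix a $\kappa$-bounded $ACh$-unifier $\theta$ of $\Gamma$, $\kappa$ being the bound used by the procedure (this is the intended reading of the statement: with $\kappa = 0$ and $\Gamma = \{x \overset{?}= h(y)\}$, for instance, the computed set is empty, so a genuinely unbounded unifier cannot be captured). Such a $\theta$ satisfies the initial set triple $\Gamma||\emptyset||\emptyset$, since it solves every equation of $\Gamma$, satisfies the empty substitution trivially, and $MaxVal$ of the empty h-depth set does not exceed $\kappa$. By the Termination theorem, exhaustive application of the rules of $\Iach$ yields a terminal disjunction $\bigvee_{i}(\Gamma_{i}||\triangle_{i}||\sigma_{i})$ reachable from $\Gamma||\emptyset||\emptyset$, and by Theorem~\ref{lem:oneormore} this disjunction is a directed conservative extension of $\Gamma||\emptyset||\emptyset$. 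Hence there are an index $i$ and a substitution $\sigma^{*}$ whose domain is contained in the fresh variables $Var(\Gamma_{i})\setminus Var(\Gamma)$ with $\theta\sigma^{*} \models \Gamma_{i}||\triangle_{i}||\sigma_{i}$; in particular $MaxVal(\triangle_{i}) \leq \kappa$, so this branch was not cut off by BC.

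The key remaining point is that $\Gamma_{i} = \emptyset$, so that $\sigma_{i} \in \Sigma$. Since $\bot$ has no solution while $\theta\sigma^{*}$ solves the $i$-th branch, that branch is not $\bot$; in particular the side conditions of OC, Clash, and BC fail there, consistently with Lemma~\ref{lem:comACU}. It then remains to check, by a case analysis on the flattened forms, that a branch to which no rule at all applies and which is not $\bot$ cannot contain any equation: a VarVar equation would enable VE1; an equation $t\overset{?}=x$ with $t$ not a variable would enable Orient; two equations sharing a left-hand variable with matching non-$+$ top symbols would enable Decomp; an $h$-equation together with a $+$-equation on the same variable would enable Splitting; any $+$-equation with $+$ on the right would be consumed by AC Unification; and any surviving $h$-equation or free equation would enable VE2 (not OC, the branch not being $\bot$). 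Hence $\Gamma_{i} = \emptyset$ and $\sigma_{i} \in \Sigma$.

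Finally I would unwind the definitions. The relation $\theta\sigma^{*} \models \emptyset||\triangle_{i}||\sigma_{i}$ says exactly that $\theta\sigma^{*}$ satisfies $\sigma_{i}$ modulo $ACh$, i.e.\ $x\theta\sigma^{*} =_{ACh} x\sigma_{i}\theta\sigma^{*}$ for every variable $x$. Put $\eta := \theta\sigma^{*}$ and restrict to $x \in Var(\Gamma)$: since $\sigma^{*}$ acts only on fresh variables, which occur neither in $Var(\Gamma)$ nor in $Range(\theta)$, we have $x\sigma_{i}\eta = x\sigma_{i}\theta\sigma^{*} =_{ACh} x\theta\sigma^{*} = x\theta$, so $\sigma_{i} \lesssim^{Var(\Gamma)}_{ACh} \theta$, which is the claim. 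The step needing the most care is the third one: establishing that a terminal branch carrying a solution has an empty equation set leans on the exact priorities and side conditions of the rules --- in particular that VE2 is eventually applicable to every residual $h$- or free equation, and that the $\bot$-producing rules are never forced on a branch that still has a solution (Lemma~\ref{lem:comACU}) --- and implicitly on the fact that the procedure keeps $\sigma_{i}$ in solved form, so that it denotes a genuine substitution.
\end{proofsketch}
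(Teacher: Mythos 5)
Your proposal follows exactly the route the paper intends: the paper states this theorem as a corollary of Theorem~\ref{lem:oneormore} (directed conservative extension under $\overset{+}{\scriptstyle\implies_{\Iach}}$) together with Termination, and offers no further detail. Your sketch is correct and moreover supplies the steps the paper leaves implicit --- that $\theta$ must be read as a $\kappa$-bounded unifier for the initial triple to be satisfied, that a terminal branch still carrying a solution cannot be $\bot$ and must therefore have $\Gamma_{i}=\emptyset$ by exhaustiveness of the rules on flattened forms, and the final unwinding of $\lesssim^{Var(\Gamma)}_{ACh}$ (modulo the standard assumption that fresh variables avoid $Range(\theta)$).
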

%\vspace{-0.5cm}
\section{Implementation}
We have implemented the algorithm in the Maude programming language\footnote{\url{http://maude.cs.illinois.edu/w/index.php/The_Maude_System}}. The implementation of this inference system is available\footnote{\url{https://github.com/ajayeeralla/Unification_ACh}}. We chose the Maude language because it provides a nice environment for expressing inference rules of this algorithm. \ignore{and the implementation of this algorithm will be integrated into the Maude-NPA\footnote{\url{http://maude.cs.illinois.edu/w/index.php/Maude_Tools:_Maude-NPA}}, a protocol analyzer written in Maude and developed by Naval Research Laboratory (NRL), USA, at some time.}The system specifications of this implementation are Ubuntu 14.04 LTS, Intel Core i5 3.20 GHz, and 8 GiB RAM with Maude 2.6.\par
We give a table to show some of our results. In the given table, we use five columns: Unification problem, Real Time, time to terminate the program in ms (milliseconds), Solution either $\bot$ for no solution or Yes for solutions, \# Sol. for number of solutions, and Bound $\kappa$.
It makes sense that the real time keeps increasing as the given h-depth $\kappa$ increases for the first problem where the other problems give solutions, but in either case the program terminates.
\begin{table*}[!t]
 \footnotesize
 \begin{tabular} {|ccccc|}
% \begin{tabular}{| p{7cm} |p{2cm} |p{1.5cm} |p{1cm} |p{1cm} |}
\hline \hline
 Unification Problem & Real Time & Solution & \# Sol.& Bound \\ \hline
\hline
 $\{h(y) \overset{?}= y + x\}$ & 674ms & $\bot$ & 0 &10 \\ \hline
 $\{h(y) \overset{?}= y + x\}$ & 15880ms & $\bot$ & 0 &20 \\ \hline
 $\{h(y) \overset{?}= x_1 + x_2\}$ &5ms & Yes & 1 &10 \\ \hline
 $\{h(h(x)) \overset{?}= h(h(y))\}$ &2ms & Yes & 1 & 10 \\ \hline
 $\{x + y_1 \overset{?}= x + y_2\}$ &3ms & Yes &1 &10 \\ \hline
 $\{v \overset{?}= x + y, v \overset{?}= w+z , s \overset{?}= h(t)\}$ &46ms & Yes & 10 &10 \\ \hline
 $\{v \overset{?}= x_1 + x_2 , v \overset{?}= x_3 + x_4 , x_1 \overset{?}= h(y), x_2 \overset{?}= h(y)\}$ &100ms & Yes & 6 &10 \\ \hline
 $\{h(h(x)) \overset{?}= v+w+y + z\}$ &224ms & Yes & 1&10 \\ \hline
 $\{v \overset{?}= (h(x)+y),v\overset{?}= w+z\}$ &55ms & Yes & 7&10 \\ \hline
  $\{f(x , y) \overset{?}= h(x_1)\}$ &0ms & $\bot$ & 0&10 \\ \hline
  $\{f(x_1 , y_1) \overset{?}= f(x_2 , y_2)\}$ &1ms & Yes & 1&10 \\ \hline
  $\{v \overset{?}= x_1 + x_2 , v \overset{?}= x_3 + x_4\} $ &17ms & Yes & 7 &10 \\ \hline
  $\{f(x_1 , y_1) \overset{?}= g(x_2 , y_2)\}$ &0ms & $\bot$ & 0&10 \\ \hline
  $\{h(y) \overset{?}= x , y \overset{?}= h(x)\}$ &0ms & $\bot$ & 0&10 \\ \hline
  \end{tabular}
  \caption{\em{Tested results with bounded ACh-unification algorithm}}
  \label{tab:imp}
 \end{table*}
 %\vspace{-0.5cm}
\section{Conclusion}

We introduced a set of inference rules to solve the unification problem modulo the homomorphism theory $h$ over an AC symbol $+$, by enforcing a threshold $\kappa$ on the h-depth of any variable.
\ignore{Our algorithm finds all the solutions or unifiers within the given h-depth $\kappa$. We implemented the algorithm in Maude because the inference rules are easy to write in Maude, and also we hope to incorporate them in the Maude-NPA tool, a protocol analyzer written in Maude. Our work on this topic actually came out of work on the Maude-NPA tool.} Homomorphism is a property that is very common in cryptographic algorithms. So, it is important to analyze cryptographic protocols in the homomorphism theory. Some of the algorithms and details in this direction can be seen in~\cite{ALLNR2, EKLMMNS, ALLNR1}. However, none of those results perform ACh unification because that is undecidable. \ignore{, One way around this, is to assume that identity and an inverse exist, but because of the way the Maude-NPA works, it would still be necessary to unify modulo ACh. So a unification algorithm there becomes crucial.} We believe that our approximation is a good way to deal with it. We also tested some problems and the results are shown in Table~\ref{tab:imp}.\\
\textbf{Acknowledgments.} We thank anonymous reviewers who have provided useful comments. Ajay Kumar Eeralla was partially supported by NSF Grant CNS 1314338.
%\bibliographystyle{plain}
%\bibliography{ACHunif}

\begin{thebibliography}{10}

\bibitem[Anantharaman \emph{et al.} 2010]{ALLNR1}
S.~Anantharaman, H.~Lin, C.~Lynch, P.~Narendran, and M.~Rusinowitch.
\newblock {Cap unification: application to protocol security modulo homomorphic
  encryption.}
\newblock In {\em {Proceedings of the 5th ACM Symposium on Information,
  Computer and Communications Security}}, pages 192--203. ACM, 2010.

\bibitem[Anantharaman \emph{et al.} 2012]{ALLNR2}
S.~Anantharaman, H.~Lin, C.~Lynch, P.~Narendran, and M.~Rusinowitch.
\newblock {Unification Modulo Homomorphic Encryption.}
\newblock In {\em {booktitle of Automated Reasoning}}, pages 135--158.
  Springer, 2012.

\bibitem[Baader and Nipkow 1998]{BN}
F.~Baader and T.~Nipkow.
\newblock {\em {Term Rewriting and All that}}.
\newblock Cambridge University Press, 1998.

\bibitem[Baader and Snyder 2001] {BS}
F.~{Baader} and W.~{Snyder}.
\newblock {Unification Theory}.
\newblock In {\em {Handbook of Automated Reasoning}}, pages 447--533. Elsevier,
  2001.

\bibitem[Clavel \emph{et al.} 2007]{maudebook}
M.~Clavel, F.~Dur{\'a}n, S.~Eker, P.~Lincoln, N.~Mart{\'i}-Oliet, J.~Meseguer,
  and Carolyn~L. Talcott.
\newblock {\em {All About Maude - A High-Performance Logical Framework, How to
  Specify, Program and Verify Systems in Rewriting Logic}}.
\newblock Springer, 2007.

\bibitem[Escobar \emph{et al.} 2011]{EKLMMNS}
S.~Escobar, D.~Kapur, C.~Lynch, C.~Meadows, J.~Meseguer, P.~Narendran, and
  R.~Sasse.
\newblock {Protocol Analysis in Maude-NPA Using Unification Modulo Homomorphic
  Encryption}.
\newblock In {\em {Proceedings of the 13th International ACM SIGPLAN Symposium
  on Principles and Practices of Declarative Programming}}, pages 65--76. ACM,
  2011.

\bibitem[Escobar \emph{et al.} 2007]{EMM}
S.~Escobar, C.~Meadows, and J.~Meseguer.
\newblock {Maude-Npa: cryptographic protocol analysis modulo equational
  properties}.
\newblock In {\em {Foundations of Security Analysis and Design V: FOSAD
  2007/2008/2009 Tutorial Lectures}}, pages 1--50. Springer, 2007.

\bibitem[Kapur \emph{et al.} 2003]{KNW}
D.~Kapur, P.~Narendran, and L.~Wang.
\newblock {An E-unification Algorithm for Analyzing Protocols That Use Modular
  Exponentiation.}
\newblock In {\em {Rewriting Techniques and Applications}}, pages 165--179.
  Springer, 2003.

\bibitem[Kremer \emph{et al.} 2010]{KRS}
S.~Kremer, M.~Ryan, and B.~Smyth.
\newblock {Election Verifiability in Electronic Voting Protocols.}
\newblock In {\em Computer Security -- ESORICS}, pages 389--404. Springer,
  2010.

\bibitem[Narendran \emph{et al.} 1996]{PN}
P.~Narendran.
\newblock {Solving Linear Equations over Polynomial Semirings}.
\newblock In {\em {Proceedings 11th Annual IEEE Symposium on Logic in Computer
  Science}}, pages 466--472. IEEE Computer Society, 1996.

\bibitem[Narendran \emph{et al.} 2015]{NMM}
A.~M. Marshall, C.~A. Meadows, and P.~Narendran.
\newblock {On Unification Modulo One-Sided Distributivity: Algorithms, Variants and Asymmetry}.
\newblock In {\em {Logical Methods in Computer Science}}, \textbf{11}~(2), pages 1--39, 2015.
 

\bibitem[Schmidt-Schau{\ss} \emph{et al.} 1998]{MS}
M.~Schmidt-Schau{\ss}.
\newblock {A Decision Algorithm for Distributive Unification.}
\newblock In {\em {Theoretical Computer Science}}, pages 111--148. Elsevier,
  1998.

\bibitem[Tiden and Arnborg 1987]{TA}
E.~Tid{\'e}n and S.~Arnborg.
\newblock {Unification Problems with One-Sided Distributivity.}
\newblock In {\em {booktitle of Symbolic Computation}}, pages 183--202.
  Springer, 1987.
  \bibitem[Fran{\c{c}}ois Fages 1984]{FF}
  F.~Fages.
  \newblock{Associative-Commutative Unification}
  \newblock In {\em 7th International Conference on Automated Deduction }
  pages 194--208.
  Springer, 1984.
  
 \bibitem[Liu and Lynch 2011]{LL}
 Z.~Liu and 
 C.~Lynch.
 \newblock{Efficient General Unification for {XOR} with Homomorphism.}
 \newblock In {\em {Automated Deduction - {CADE-23} - 23rd International Conference on Automated Deduction}},
  pages 407--421.
  Springer, 2011.  
%\bibitem[Ajay and Chris 2018]{AC}
%A.K.~Eeralla and C.~Lynch.
%\newblock {Bounded ACh Unification (online version)}
%\newblock In {\em {Computing Research Repository (CoRR)}}, 2018
\end{thebibliography}

%\section{Possible Improvements}
%  \begin{tcolorbox}[colback=yellow!100]
%  \begin{enumerate}
%  \item Cite one of the Chris' paper for problem format
%  \item Describe Maude-NPA tool
%  \item Improve the soundness proof
%  \item Review and do a spell-check
%  \end{enumerate}
%  \end{tcolorbox}

%\input{appendix.tex}

\end{document}